\newcommand*{\myblock}[1]{\subparagraph{#1.}}
\newcommand{\calX}{{\mathcal{X}}}
\newcommand{\calF}{{\mathcal{F}}}
\newcommand{\calFcopy}{{\mathcal{F}_{\rm copy}}}
\newcommand{\srcL}[1]{{{\rm src}_{\rm L}({#1})}}
\newcommand{\srcR}[1]{{{\rm src}_{\rm R}({#1})}}
\newcommand{\posL}[1]{{{\rm pos}_{\rm L}({#1})}}
\newcommand{\posR}[1]{{{\rm pos}_{\rm R}({#1})}}
\newcommand{\rel}[1]{{{\rm rel}({#1})}}
\newcommand{\abs}[1]{{{\rm abs}({#1})}}
\newcommand{\jump}[1]{{{\rm jump}({#1})}}
\newcommand{\calG}{{\mathcal{G}}}
\newcommand{\calD}{{\mathcal{D}}}
\renewcommand{\exp}[1]{{{\rm exp}({#1})}}
\newcommand{\zbe}{{z_{\rm be}}}
\newcommand{\zbeg}{{z_{\rm be}^{\rm g}}}
\newcommand{\zbeopt}{{z_{\rm be}^{\rm OPT}}}
\newcommand{\gopt}{{g^{\rm OPT}}}
\newcommand{\grlopt}{{g_{\rm rl}^{\rm OPT}}}
\newcommand{\head}[1]{{{\rm head}({#1})}}
\newcommand{\tail}[1]{{{\rm tail}({#1})}}
\newcommand{\val}[1]{{{\rm val}({#1})}}
\newcommand{\lca}[1]{{{\rm lca}({#1})}}
\newcommand{\calI}{{\mathcal{I}}}
\newcommand{\calB}{{\mathcal{B}}}
\newcommand{\ns}[1]{{n_{s}({#1})}}
\renewcommand{\ne}[1]{{n_{e}({#1})}}
\title{
LZBE: an LZ-style compressor supporting $O(\log n)$-time random access
} %
\newtheorem{theorem}{Theorem}
\newtheorem{lemma}[theorem]{Lemma}
\newtheorem{definition}[theorem]{Definition}
\newtheorem{corollary}[theorem]{Corollary}
\author[1]{Hiroki Shibata\thanks{\texttt{shibata.hiroki.753@s.kyushu-u.ac.jp}}}
\author[2]{Yuto Nakashima\thanks{\texttt{nakashima.yuto.003@m.kyushu-u.ac.jp}}}
\author[3]{Yutaro Yamaguchi\thanks{\texttt{yutaro.yamaguchi@ist.osaka-u.ac.jp}}}
\author[2]{Shunsuke Inenaga\thanks{\texttt{inenga.shunsuke.380@m.kyushu-u.ac.jp}}}
\affil[1]{Joint Graduate School of Mathematics for Innovation, Kyushu University, Japan}
\affil[2]{Department of Informatics, Kyushu University, Kyushu University, Japan}
\affil[3]{Department of Information and Physical Sciences, Osaka University, Japan}
\begin{document}

\maketitle

\begin{abstract}
An LZ-like factorization of a string divides it into factors, each being either a single character or a copy of a preceding substring.
While grammar-based compression schemes support efficient random access with space linear in the compressed size, no comparable guarantees are known for general LZ-like factorizations.  
This limitation motivated restricted variants such as LZ-End [Kreft and Navarro, 2013] and height-bounded LZ (LZHB) [Bannai et al., 2024], which trade off some compression efficiency for faster access.
In this paper, we introduce \emph{LZ-Begin-End} (\emph{LZBE}), a new LZ-like variant in which every copy factor must refer to a contiguous sequence of preceding factors.
This structural restriction ensures that any context-free grammar can be transformed into an LZBE factorization of the same size.
We further study the greedy LZBE factorization, which selects each copy factor to be as long as possible while processing the input from left to right, and show that it can be computed in linear time.
Moreover, we exhibit a family of strings for which the greedy LZBE factorization is asymptotically smaller than the smallest grammar.
These results demonstrate that the LZBE scheme is strictly more expressive than grammar-based compression in the worst case.
To support fast queries, we propose a data structure for LZBE-compressed strings that permits $O(\log n)$-time random access within space linear in the compressed size, where $n$ is the length of the input string.
\end{abstract}

\section{Introduction}

Data compression is a fundamental technique for reducing storage and transmission costs in large-scale data processing.  
Its importance is particularly evident for \emph{highly repetitive} string collections, which share many long repeated patterns.
Such data naturally arise in various domains, including version-controlled data and genomic databases, and exhibit substantial redundancy that can be effectively exploited by modern compression algorithms~\cite{DBLP:journals/csur/Navarro21a}.

Computing directly on compressed data without decompressing it into its original form is a fundamental task in data compression and compressed data structures.  
Among those studied in this context, \emph{random access} to retrieve any character of the original text on its compressed representation is one of the most fundamental and widely studied problems~\cite{DBLP:journals/siamcomp/BilleLRSSW15,DBLP:conf/soda/KempaS22}.  
Efficient random access is a core primitive for various higher-level operations in compressed space and thus plays a central role in the design of compressed data structures.

A \emph{Lempel-Ziv (LZ)-like factorization} is a factorization of a string in which each factor is either a single character or a copy of an earlier substring of the same text.  
First introduced nearly fifty years ago~\cite{DBLP:journals/tit/LempelZ76},  
LZ-like factorization has remained a central framework of both theory and practice,  
powering classic tools such as \texttt{gzip} and still forming a core component of modern compressors like \texttt{zstd} and \texttt{brotli}~\cite{DBLP:journals/rfc/rfc7932,DBLP:journals/rfc/rfc8478,DBLP:journals/rfc/rfc1952}.

In general, stronger compression tends to introduce more complex dependency structures, making efficient queries on compressed data more difficult.
This trade-off is well illustrated by the contrast between grammar-based and LZ-like compression schemes.
A grammar-based compression is known to be asymptotically at least as large as the corresponding LZ factorization for every string~\cite{DBLP:journals/tcs/Rytter03},
but it allows for random access in $O(\log n)$ time using linear space in the size of the compressed representation~\cite{DBLP:journals/siamcomp/BilleLRSSW15},
where $n$ denotes the length of the input text.  
In contrast, despite achieving stronger compression, no known data structure supports polylogarithmic time random access for general LZ-like factorizations.  

These observations naturally lead to a fundamental theoretical question:  
what is the smallest class of compression schemes that supports efficient random access, ideally in $O(\log n)$ time?  
To address this question, recent studies have explored restricted variants of LZ-like factorizations that impose structural constraints on copy relationships,  
aiming to retain much of the compression efficiency of LZ while allowing efficient random access~\cite{DBLP:conf/esa/BannaiFHMP24,DBLP:journals/tcs/KreftN13,DBLP:conf/cpm/LiptakM024}.

\myblock{Contributions}
This paper proposes the \emph{LZ-Begin-End (LZBE) factorizations},  
a new class of LZ-like factorizations in which each copy factor refers to a contiguous sequence of preceding factors,  
i.e., both the begin and end positions of the reference must align with factor boundaries.  
LZBE lies between grammar-based compression and LZ-Begin/LZ-End in terms of structure and compression performance.  
Our contributions are summarized as follows:
\begin{enumerate}
    \item 
    We clarify the theoretical relationship between LZBE factorizations and grammar-based compression.  
    First, we show that any context-free grammar can be converted into an LZBE factorization without increasing its size.
    We then prove that the greedy LZBE factorization can be asymptotically smaller than the smallest grammar, and establish a matching upper bound on the grammar–LZBE size ratio.
    Together, these results establish the relationship between the two representations with matching upper and lower bounds.
    \item We propose a linear-space compressed data structure that supports random access in $O(\log n)$ time.
    This establishes LZBE as one of the smallest known compression schemes that support $O(\log n)$-time random access using space linear in the compressed size.
    \item We present a linear-time algorithm for computing the greedy LZBE factorization.
\end{enumerate}

\myblock{Related Works}
To address the limitations of traditional LZ-like factorizations, several restricted variants have been proposed, including LZ-Begin/LZ-End~\cite{DBLP:journals/tcs/KreftN13} and Height-Bounded LZ (LZHB, also known as BAT-LZ)~\cite{DBLP:conf/esa/BannaiFHMP24, DBLP:conf/cpm/LiptakM024}.  
These variants aim to support efficient random access while keeping space usage linear in their compressed size.  
LZ-Begin/LZ-End restricts each copy factor to reference a previous substring that begins/ends at a factor boundary, whereas LZHB constrains the height of the dependency graph formed by copy relationships.

The LZ-End index~\cite{DBLP:conf/soda/KempaS22} allows for random access in \(O(\log^4 n \log \log n)\) time on LZ-End compressed strings.  
However, that work establishes only the ``existence'' of such a data structure and does not provide an efficient construction algorithm.
The authors note that the construction achieving the worst-case polylogarithmic query time is only known to run in expected polynomial time and no deterministic variant is known.
Improving either the construction time or derandomization remains an open problem.

In contrast, LZHB with a maximum height of \(O({\rm poly}\log n)\) naturally supports \(O({\rm poly}\log n)\)-time random access.
Nevertheless, the theoretical understanding of LZHB remains limited.  
Given an arbitrary LZ-like factorization (even when restricted to LZ-Begin/LZ-End), how to “balance’’ the factorization without increasing its size is still an open problem.  
Moreover, it is known that finding the smallest parsing under a given height bound is NP- and APX-hard~\cite{DBLP:journals/iandc/CicaleseU25}.
Therefore, it remains unclear whether LZHB can outperform other LZ-like factorization methods.

\section{Preliminaries}

Let $\Sigma$ be a set of symbols called the \emph{alphabet}, and $\Sigma^*$ the set of strings.
An element of $\Sigma$ is called a \emph{character}.  
We denote a string $T$ of length $n$ as $T = T[1] \cdots T[n]$, where $T[i]$ is the $i$-th character of $T$.  
We denote the substring from position $i$ to position $j$ in $T$ by $T[i, j]$, for $1 \leq i \leq j \leq n$.
A sequence $\calF = (F_1, \dots, F_f)$ of non-empty strings such that $T = F_1 \cdots F_f$ is called a \emph{factorization} of $T$.
Each element $F_i \in \calF$ is called a \emph{factor}.
Throughout this paper, we use $n$ to denote the length of $T$.

A \emph{context-free grammar (CFG)} generating a single string consists of the set  $\Sigma$ of terminals, the set $\mathcal{X}$ of nonterminals, the set of production rules where each $X \in \mathcal{X}$ has exactly one rule, and the start symbol $S \in \mathcal{X}$.  
We assume the derivation relation is acyclic, so that each nonterminal $X$ expands to a unique string $\exp{X}$.  
The size of the grammar is the total number of symbols on the right-hand sides of all production rules.  
Let $\gopt$ denote the size of the smallest CFG generating a given string $T$.
A \emph{straight-line program (SLP)} is a CFG in Chomsky normal form, where each rule is either $X \rightarrow c$ for $c \in \Sigma$, or $X \rightarrow YZ$ for $Y, Z \in \mathcal{X}$.  
Any CFG of size $g$ can be converted into an equivalent SLP of size $O(g)$.

A \emph{Lempel--Ziv (LZ)-like factorization}~\cite{DBLP:journals/tit/ZivL77} is a factorization in which each factor is either a single character or a substring that occurs earlier in the string.
It is known that the LZ-like factorization obtained by scanning the string left to right and greedily selecting the longest valid factor at each step minimizes the number of factors.
For an LZ-like factorization $\calF$, we refer to factors that are the single character of the first occurrence as \emph{char factors}, and those that copy a previous occurrence as \emph{copy factors}.  
Let $\calFcopy \subseteq \calF$ denote the set of all copy factors.
For any factor $F_i \in \calF$, define $\posL{F_i} = \sum_{j=1}^{i-1} |F_j| + 1$ and $\posR{F_i} = \sum_{j=1}^{i} |F_j|$.  
For each copy factor $F_i \in \calFcopy$, we fix one of its earlier occurrences in $F_1 \cdots F_i$ and define $\srcL{F_i}$ and $\srcR{F_i}$ as its beginning and ending positions in $T$.
Given a position $1 \leq p \leq n$, we define $\rel{p}$ as the pair 
$(F, r)$ where $F$ is the factor containing the position $p$ and $r = p - \posL{F} + 1$ is the relative position of $p$ within $F$.
Conversely, given a factor $F$ and a relative position $1 \leq r \leq |F|$, the absolute position in $T$ is given by $\abs{F, r} = \posL{F} + r - 1$.

Since every copy factor refers to an earlier occurrence, the original string can be restored using only an LZ-like factorization.  
We define the \emph{jump function} $\jump{F, r}$ for a copy factor $F \in \calFcopy$ and position $1 \leq r \leq |F|$ as $\jump{F, r} = \rel{q}$, where $q = \srcL{F} + r - 1$ is the position in $T$ that $F[r]$ refers to.
Given a pair $(F, r)$, the \emph{jump sequence} is defined as the sequence $(F_{i_0}, r_0), (F_{i_1}, r_1), \dots, (F_{i_k}, r_k)$ such that $(F_{i_0}, r_0) = (F, r)$ and $(F_{i_{t+1}}, r_{t+1}) = \jump{F_{i_{t}}, r_{t}}$ for all $t < k$, where $F_{i_k}$ is a char factor representing $T[\abs{F, r}]$.
This sequence traces the chain of references back to the character’s first occurrence.
To restore a character $T[p]$, we follow the jump sequence from $\rel{p}$ and return the character in the final pair.

A \emph{dependency DAG} $\calD = \langle \calF, E \rangle$ for an LZ-like factorization $\calF$  
is a directed acyclic graph where $E = \{ (F_i, F_j) \in \calF \times \calF \mid [\posL{F_j}, \posR{F_j}] \subseteq [\srcL{F_i}, \srcR{F_i}] \}$.  
The outgoing edges from each node $F_i$ are ordered to follow the left-to-right order of the factors.
In this DAG, a jump to a source factor corresponds to traversing an edge, and a jump sequence corresponds to a path to a vertex representing a char factor.

\section{LZBE Factorization}

In this section, we introduce a new factorization scheme called LZ-Begin-End (LZBE) and explore its relationships with other compression schemes.

An \emph{LZ-Begin-End (LZBE) factorization} of a string $T$ is an LZ-like factorization 
$\mathcal{F} = (F_1, \dots, F_\zbe)$ in which 
every copy factor must have a previous occurrence that begins and ends at factor boundaries.
Hence, each copy factor corresponds to a concatenation of a contiguous sequence of preceding factors.
Formally, for every copy factor $F_i$, there exist integers $1 \le j \le k < i$ such that $F_i = F_j F_{j+1} \cdots F_k$.
The \emph{greedy LZBE factorization} of a string $T$ is defined as the LZBE factorization obtained by scanning $T$ from left to right and, at each step, selecting the longest valid factor that satisfies the LZBE condition.  
We note that the greedy LZBE factorization is not necessarily the smallest LZBE factorization (see Appendix~\ref{app:greedy_approx}).
Let $\zbeg$ and $\zbeopt$ denote the number of factors in the greedy and the smallest LZBE factorizations of $T$, respectively.

Since each copy factor refers to a contiguous sequence of preceding factors, an LZBE factorization naturally defines a context-free grammar.  
Specifically, for each copy factor $F_i \in \calFcopy$ of the form $F_i = F_j \cdots F_k$, we define the rule $F_i \rightarrow F_j \cdots F_k$, and for each char factor $F_i$ representing $c$, we define the rule $F_i \rightarrow c$.
The start symbol $S$ is defined by the rule $S \rightarrow F_1 \cdots F_\zbe$.
This grammar generates $T$, and its size is $O(\zbe^2)$, since each copy factor may refer to up to $O(\zbe)$ factors.  
In this sense, the LZBE factorization can be seen as a compact representation of a CFG with a specific form, where each rule produces either a single character or a contiguous sequence of existing nonterminals.
Figure~\ref{fig:LZBE_to_grammar} shows an LZBE factorization and its corresponding grammar.

\begin{figure}[htbp]
    \centering
    \begin{minipage}[b]{0.7\linewidth}
        \centering
        \includegraphics[width=\linewidth]{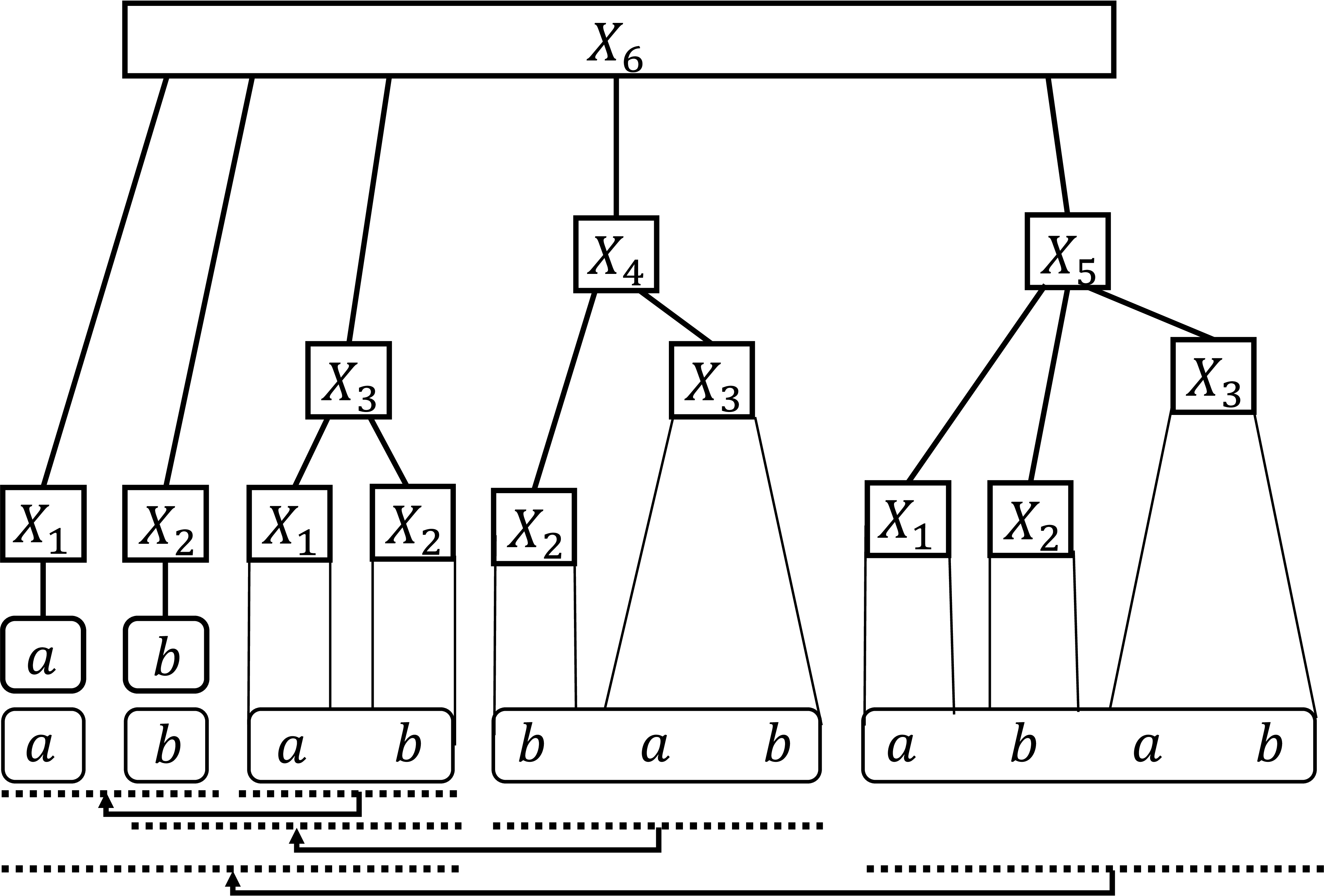}
    \end{minipage}
    \caption{
    The greedy LZBE factorization of $ababbababab$ and the pruned derivation tree of the corresponding CFG.
    Dotted lines and arrows indicate the sources of copy factors.
    }
    \label{fig:LZBE_to_grammar}
\end{figure}

As is shown above, an LZBE factorization can be viewed as a compact representation of a CFG with a specific structure.  
Despite this restricted form, any CFG can be represented as an LZBE factorization without increasing its size.
\begin{theorem}
Given a CFG of size $g$ representing string $T$, we can construct an LZBE factorization with at most $g$ factors.
\end{theorem}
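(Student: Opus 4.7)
My plan is to construct the factorization by a left-to-right, depth-first walk of the derivation tree determined by the grammar, unfolding each nonterminal only at its first occurrence and representing every subsequent occurrence by a single copy factor pointing to the contiguous block of factors produced by that first unfolding. During the walk I would maintain, for each nonterminal $X$, the range $[a_X,b_X]$ of factor indices generated by its first expansion. At a terminal $c$ I emit a char factor if $c$ has not yet appeared in $T$ and otherwise a length-one copy factor pointing to the earlier char factor for $c$. At a first occurrence of a nonterminal $Y$ I recurse on the symbols of $Y$'s right-hand side and then record the range of the newly appended factors as $[a_Y,b_Y]$. At a later occurrence of $Y$ I emit a single copy factor whose source range is $[a_Y,b_Y]$.

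Next I would verify that this procedure yields a valid LZSE factorization. Because the walk is depth-first and left-to-right, all factors produced during the first expansion of any $Y$ occupy consecutive positions in the output and are fully placed before any later occurrence of $Y$ is processed. Hence every copy factor references a contiguous block of strictly earlier factors, which is exactly the LZSE condition, and a routine induction on nonterminal height shows that the concatenation of the emitted factors equals $T$.

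The heart of the argument is a counting step. Each emitted factor can be charged to a unique symbol occurrence on some right-hand side: a terminal symbol always yields one factor, and a nonterminal symbol at a non-first occurrence also yields one factor, whereas a nonterminal symbol at its first occurrence yields zero factors directly, since the construction recurses into its own rule and the resulting factors are charged there instead. Because the right-hand sides contain $g'$ symbol occurrences in total and each reachable nonterminal other than the start symbol contributes exactly one first-occurrence event, the total factor count equals $g'-(m-1)$, where $m$ is the number of reachable nonterminals, and hence at most $g'$.

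The main place where care is needed is ensuring that the ranges $[a_Y,b_Y]$ really are contiguous and complete at the moment they are referenced; this is the single structural fact on which the construction depends, and it follows immediately from the depth-first, left-to-right traversal order. Once that is in hand, the rest of the proof is a direct accounting exercise.
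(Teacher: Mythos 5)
Your construction is exactly the paper's: a left-to-right DFS that expands each nonterminal only at its leftmost occurrence and turns every later occurrence into a single copy factor is precisely the pruned-derivation-tree (grammar decomposition) argument the paper uses, with the same contiguity observation and the same charging of factors to right-hand-side symbol occurrences. The proof is correct and takes essentially the same approach.
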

\begin{proof}
We apply the grammar decomposition~\cite{DBLP:journals/tcs/Rytter03} to $T$, which prunes all subtrees rooted at non-leftmost occurrences of nonterminals in the derivation tree.  
Each leaf in the pruned tree becomes a factor: terminals form char factors, and pruned nonterminals form copy factors referring to their leftmost occurrence.  
Since each nonterminal appears at most once as an internal node, the number of factors is bounded by $g$. 
As each copy factor refers to the concatenation of leaves (i.e., factors) in a subtree, it refers to a consecutive preceding factors.  
Hence, the resulting factorization is an LZBE factorization.
\end{proof}
This theorem immediately implies that $\zbeopt \leq \gopt$.
Figure~\ref{fig:grammar_decomposition} shows an example of a grammar and its corresponding LZBE factorization.

\begin{figure}[htbp]
    \centering
    \begin{minipage}[b]{0.48\linewidth}
        \centering
        \includegraphics[width=\linewidth]{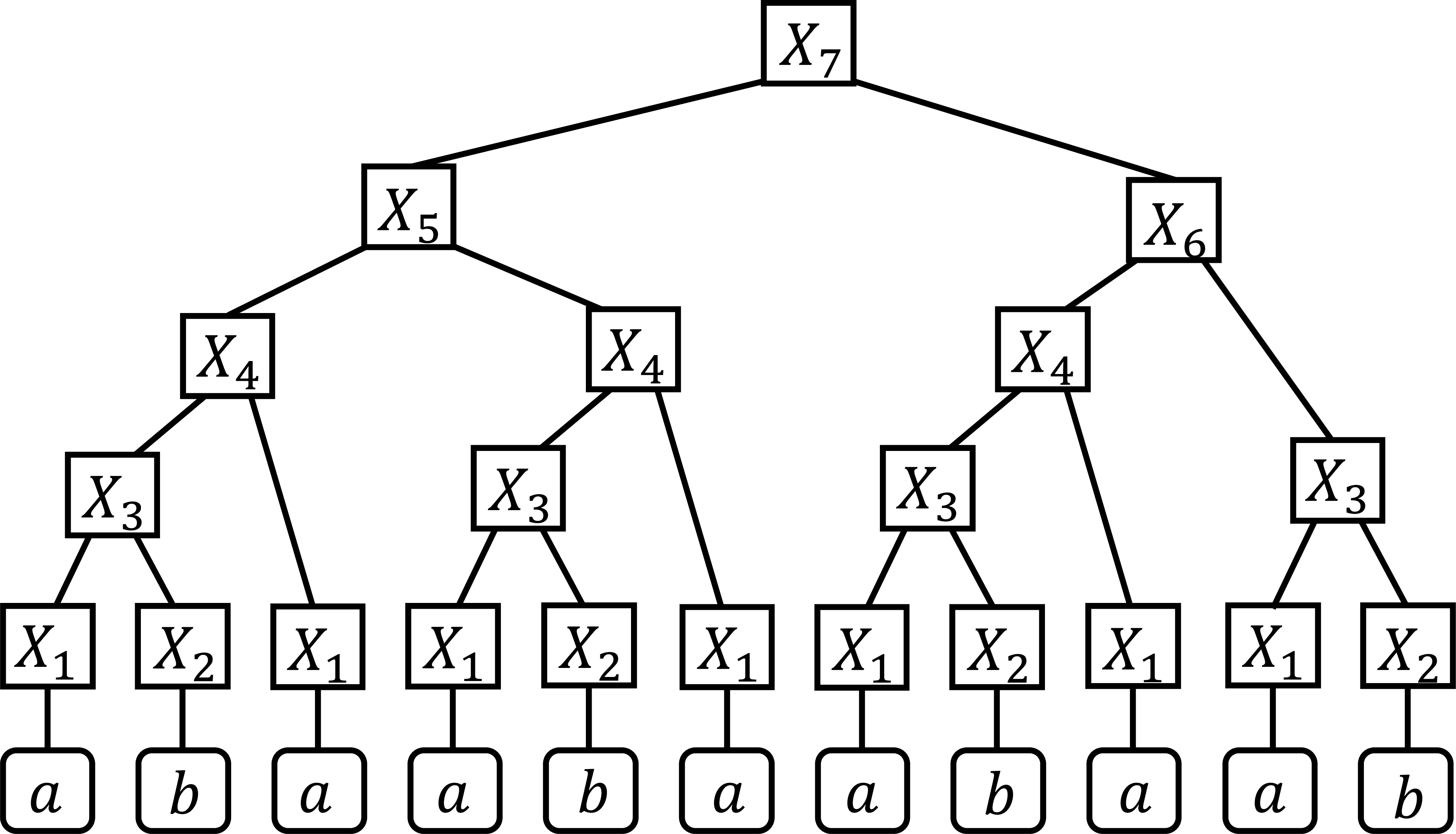}
    \end{minipage}
    \hfill
    \begin{minipage}[b]{0.48\linewidth}
        \centering
        \includegraphics[width=\linewidth]{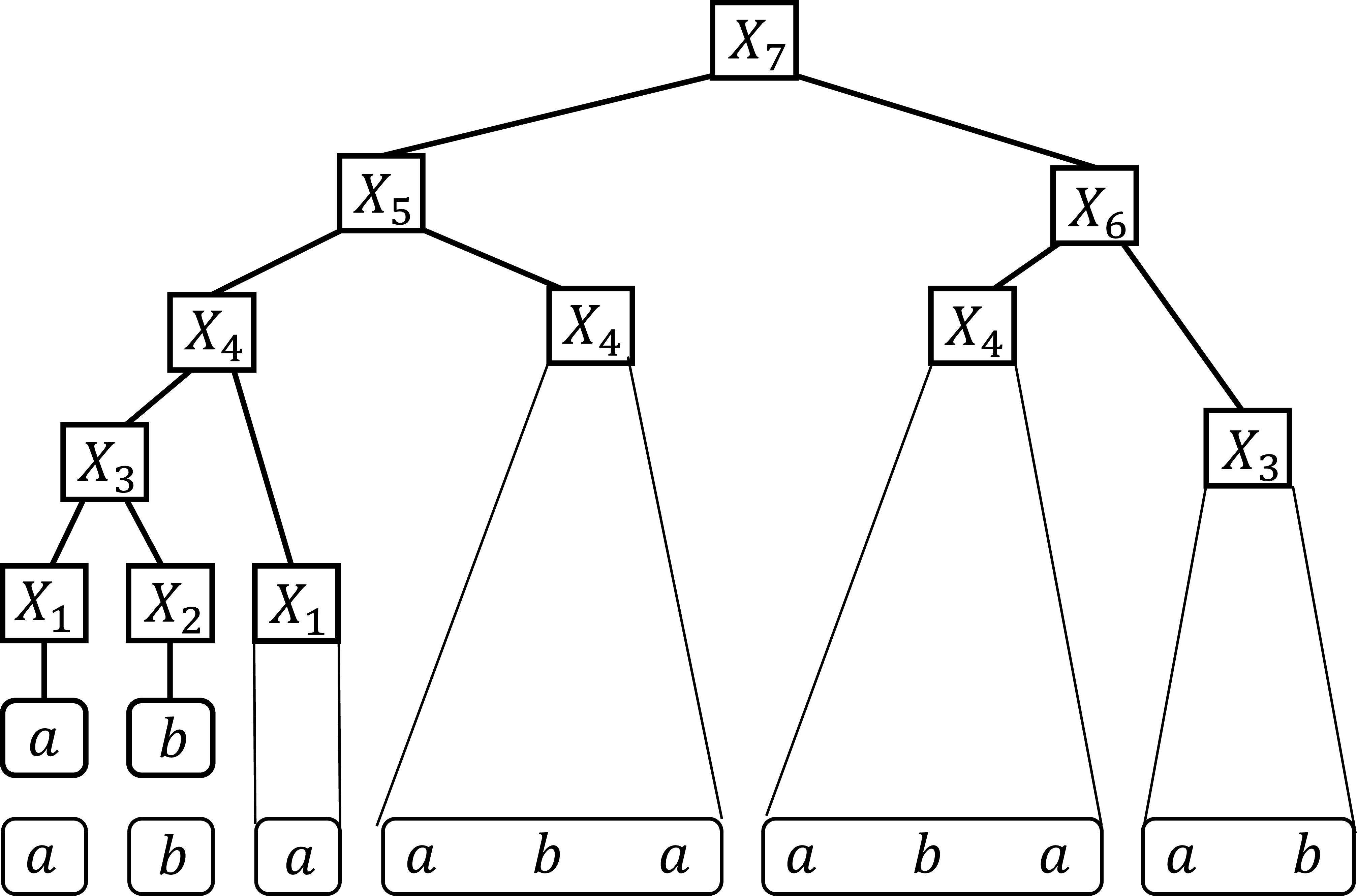}
    \end{minipage}
    \caption{
    The derivation tree of a grammar generating $abaabaabaab$ (left) and its pruned derivation tree (right).
    The LZBE factorization corresponding to the grammar is shown below the pruned tree.
    }
    \label{fig:grammar_decomposition}
\end{figure}

The above discussion shows that the LZBE scheme is at least as powerful as grammar compression.  
A natural question is whether they are asymptotically equivalent or strictly stronger.
To address this, we present the following theorem.

\begin{theorem} \label{thm:LZBE_grammar_differ}
For an arbitrarily large $m$, there is a string $T$ with $|T| \in \Theta(m)$ such that $\gopt \in \Omega(\zbeg \alpha(\zbeg))$, where $\alpha(x)$ denotes the functional inverse of Ackermann's function.
\end{theorem}

To prove this, we consider the \emph{off-line range-product problem (ORPP)}.
\begin{definition}[\cite{DBLP:journals/ijcga/ChazelleR91}]
The \emph{off-line range-product problem (ORPP)} is defined as follows:

\noindent
{\bf Input:} a sequence $(a_1, \dots, a_m) \in \calX^m$, and a sequence of pairs of integers $(l_1, r_1), \dots, (l_{m'}, r_{m'})$ such that $1 \leq l_i \leq r_i \leq {m'}$ for all $i$.\\
{\bf Output:} a sequence $(q_1, \dots, q_{m'}) \in \calX^{m'}$ such that $q_i = \bigotimes_{j=l_i}^{r_i} a_j$ for each $i$.

Here, $\calX$ is a set and $\otimes$ is a binary operation such that $(\calX, \otimes)$ forms a semigroup.
\end{definition}
The following lower bound is known for this problem:
\begin{theorem}[\cite{DBLP:journals/ijcga/ChazelleR91}] \label{lem:ORPP_lowerbound}
For an arbitrarily large $m$, there exists an input to ORPP
that requires $\Omega(m \alpha(m))$ semigroup operations, where $m' \in O(m)$.
\end{theorem}

We construct an input string $T$ for ORPP over the alphabet $\Sigma = \{ x_1, \dots, x_m, \$_1, \dots, \$_{m+1} \}$ as follows:
\[
    T = x_1 \cdots x_m \$_1 Q_1 \$_2 Q_2 \$_3 \cdots \$_m Q_m \$_{m+1}
    \text{, where } Q_i = x_{l_i} \cdots x_{r_i}.
\]
We refer to each character $\$_i$ as a delimiter character, and let $\Sigma_x = \{x_1, \dots, x_m\}$ be the set of non-delimiter characters.  
The greedy LZBE factorization of $T$ consists of the first $m$ characters and each delimiter as a char factor, and each substring $x_{l_i} \cdots x_{r_i}$ as a copy factor referring to $[l_i, r_i]$.  
Thus, $\zbeg \in \Theta(m)$ holds.
We now show the following lemma.
\begin{lemma} \label{lem:grammar_ORPP}
If there exists a CFG $\calG$ of size $g$ producing $T$, the answer of ORPP can be computed by $O(g)$ semigroup operations.
\end{lemma}
\begin{proof}
We first convert the input CFG $\calG$ into an equivalent SLP $\calG'$ of size $O(g)$.  
Since $T$ contains $2m+1$ distinct characters, we have $g \in \Omega(m)$.
For any nonterminal $X$, let $\head{X}$ and $\tail{X}$ denote the longest prefix and suffix of $\exp{X}$ consisting of non-delimiter characters, respectively.  
For any string $C = x_{t_1} \cdots x_{t_{|C|}}$ over $\Sigma_x$, we define $\val{C} = \bigotimes_{j=1}^{|C|} x_{t_j}$.  
We compute $\val{\head{X}}$ and $\val{\tail{X}}$ for each nonterminal $X$ in a bottom-up fashion.  
For rules $X \rightarrow x_i$, the values are trivially $x_i$.  
For a nonterminal $X$ with a rule $X \rightarrow YZ$, the values are computed as follows:
\begin{align*}
    \val{\head{X}} &=
    \begin{cases}
        \val{\head{Y}} & \text{if } \exp{Y} \text{ contains a delimiter}, \\
        \val{\head{Y}} \otimes \val{\head{Z}} & \text{otherwise},
    \end{cases}\\
    \val{\tail{X}} &=
    \begin{cases}
        \val{\tail{Z}} & \text{if } \exp{Z} \text{ contains a delimiter}, \\
        \val{\tail{Y}} \otimes \val{\tail{Z}}\;\;\;\; & \text{otherwise}.
    \end{cases}
\end{align*}
An example of computing $\val{\head{X}}$ and $\val{\tail{X}}$ using the rules above is shown in Figure~\ref{fig:grammar_merge}.
Since the total number of nonterminals is $O(g)$, we can compute $\val{\head{X}}$ and $\val{\tail{X}}$ for all nonterminals using $O(g)$ semigroup operations.

Since $T$ contains delimiters $\$_1, \dots, \$_{m+1}$ in order, for each $1 \leq i \leq m$, there exists a unique rule $X \rightarrow YZ$ such that $Y$ contains $\$_i$ and $Z$ contains $\$_{i+1}$.
For such a rule, the string $Q_i$ corresponding to the $i$-th query range can be represented as $Q_i = \tail{Y} \cdot \head{Z}$.
Thus, the value $q_i$ of the $i$-th range-product query can be computed by
\[
q_i = \val{Q_i} = \val{\tail{Y} \cdot \head{Z}} = \val{\tail{Y}} \otimes \val{\head{Z}}.
\]
After computing $\val{\head{X}}$ and $\val{\tail{X}}$ for all $X$, we can compute all $q_i$ in $O(m)$ time.
Thus, the total number of operations required is $O(g + m) = O(g)$.
\end{proof}
We are now ready to prove Theorem \ref{thm:LZBE_grammar_differ}.
\begin{proof}[Proof of Theorem~\ref{thm:LZBE_grammar_differ}]
Let $m$ be an arbitrarily large integer, and consider an ORPP instance requiring $\Omega(m \alpha(m))$ semigroup operations by Lemma~\ref{lem:ORPP_lowerbound}.  
Construct the string $T$ from this instance as described earlier.  
The greedy LZBE factorization of $T$ has size $\zbeg \in \Theta(m)$.
Assume, for contradiction, that $\gopt \in o(\zbeg \alpha(\zbeg))$ for such a string $T$.  
Then, by Lemma~\ref{lem:grammar_ORPP}, the ORPP instance could be solved in $O(g)$ operations,  
contradicting the lower bound $\Omega(m \alpha(m)) = \Omega(\zbeg \alpha(\zbeg))$.  
Therefore, the theorem holds.
\end{proof}

Conversely, we can also prove that the above lower bound is tight even when 
considering the ratio between the smallest grammar size and the smallest LZBE factorization size.
This is shown by a reduction to the same problem together with an existing upper bound of ORPP.
\begin{theorem}[\cite{DBLP:journals/corr/abs-2406-06321}] \label{lem:ORPP_upperbound}
ORPP can be solved using $O(m \alpha(m))$ semigroup operations.
More precisely, for any input instance of ORPP, there exists a sequence $(b_1, \dots, b_t) \subseteq \calX^t$ of length $t \in O(m \alpha(m))$ such that 
\begin{enumerate}
    \item For each $1 \leq i \leq m$, $b_i = a_i$,
    \item For each $m+1 \leq i \leq t$, there exist integers $1 \leq c_i \leq d_i < i$ such that $b_i = b_{c_i} \otimes b_{d_i}$, and
    \item For each $1 \leq i \leq m$, there exists an integer $e_i$ with $b_{e_i} = q_i$.
\end{enumerate}
\end{theorem}
The construction given in the theorem directly yields an SLP of size $O(m \alpha(m))$, because the sequences $(b_1,\dots,b_t)$, $(c_{m+1}, \dots, c_t)$, and $(d_{m+1}, \dots, d_t)$ can be interpreted as a derivation sequence of nonterminals.
\begin{corollary} \label{cor:ORPP_upperbound_grammar}
For any input instance of ORPP, there exists an SLP of size 
$O(m\alpha(m))$ over the terminal alphabet 
$\Sigma_a = \{ a_1, \dots, a_m \}$ such that, 
for each $1 \leq i \leq m$, there exists a nonterminal $X_i$ 
satisfying $\exp{X_i} = a_{l_i} \cdots a_{r_i}$.
\end{corollary}

We now show the following theorem which directly implies $\gopt \in O(\zbeopt \alpha(\zbeopt))$.
\begin{theorem} \label{thm:ORPP_upperbound}
Given an LZBE factorization of a string $T$ with $\zbe$ factors,
we can construct a grammar deriving $T$ of size $O(\zbe \alpha(\zbe))$.
\end{theorem}
\begin{proof}
We begin by constructing an ORPP instance from the given LZBE factorization.
The instance consists of a sequence $(a_1, \dots, a_\zbe)$ and $\zbe$ queries, 
each corresponding to an LZBE factor.
If the $i$-th factor $F_i$ is a char factor, its query range is simply $(l_i, r_i) = (i, i)$.
If $F_i$ is a copy factor referring to $F_j \cdots F_k$, its query range is defined as $(l_i, r_i) = (j, k)$.

Using Corollary~\ref{cor:ORPP_upperbound_grammar},  
we obtain an SLP of size $O(\zbe \alpha(\zbe))$.  
We then convert this SLP with terminals $\Sigma_a$ into a CFG with terminals $\Sigma$ that generates the original string $T$.  
First, we introduce a new start symbol $S$ and add the rule $S \rightarrow a_1 \cdots a_\zbe$.  
Next, for each terminal $a_i$, we replace all its occurrences in the rules as follows.
\begin{enumerate}
    \item If $F_i$ is a char factor representing a character $c$, 
    then replace $a_i$ with a new nonterminal $X_i$ such that $\exp{X_i} = c$.
    \item If $F_i$ is a copy factor, 
    then replace $a_i$ with the existing nonterminal $X_i$ such that $\exp{X_i} = a_{l_i} \cdots a_{r_i}$.
\end{enumerate}
The existence of such a nonterminal
in the second case is guaranteed by Corollary \ref{cor:ORPP_upperbound_grammar}.
We can easily see that the size of the new grammar is $O(\zbe \alpha(\zbe))$.

Since the start symbol $S$ is expanded as $S \rightarrow a_1 \cdots a_\zbe$,  
it suffices to verify that each nonterminal $X_i$ replacing $a_i$ satisfies $\exp{X_i} = F_i$.  
We prove this by induction on $i$.
If $F_i$ is a char factor representing $c$, then by the definition of $X_i$, the nonterminal $X_i$ generates $c$, and hence $\exp{X_i} = F_i$.  
Now suppose that $F_i$ is a copy factor, and assume as the induction hypothesis that $\exp{X_{i'}} = F_{i'}$ for all $1 \le i' < i$.  
The new nonterminal $X_i$ corresponding to $F_i$ is defined by the rule $X_i \rightarrow X_{l_i} \cdots X_{r_i}$.  
By the induction hypothesis and the fact that $l_i < r_i < i$, each $X_{i'}$ with $l_i \le i' \le r_i$ expands to $F_{i'}$.  
Therefore, $\exp{X_i} = \exp{X_{l_i}} \cdots \exp{X_{r_i}} = F_{l_i} \cdots F_{r_i} = F_i$.
By induction, we have $\exp{X_i} = F_i$ for all $1 \le i \le \zbe$, and consequently $\exp{S} = F_1 \cdots F_\zbe = T$.  
Since the size of the constructed grammar is $O(\zbe \alpha(\zbe))$, the theorem follows.
\end{proof}

\begin{figure}[htbp]
    \centering
    \begin{minipage}[b]{0.05\linewidth}
    \end{minipage}
    \hfill
    \begin{minipage}[b]{0.4\linewidth}
        \centering
        \includegraphics[width=\linewidth]{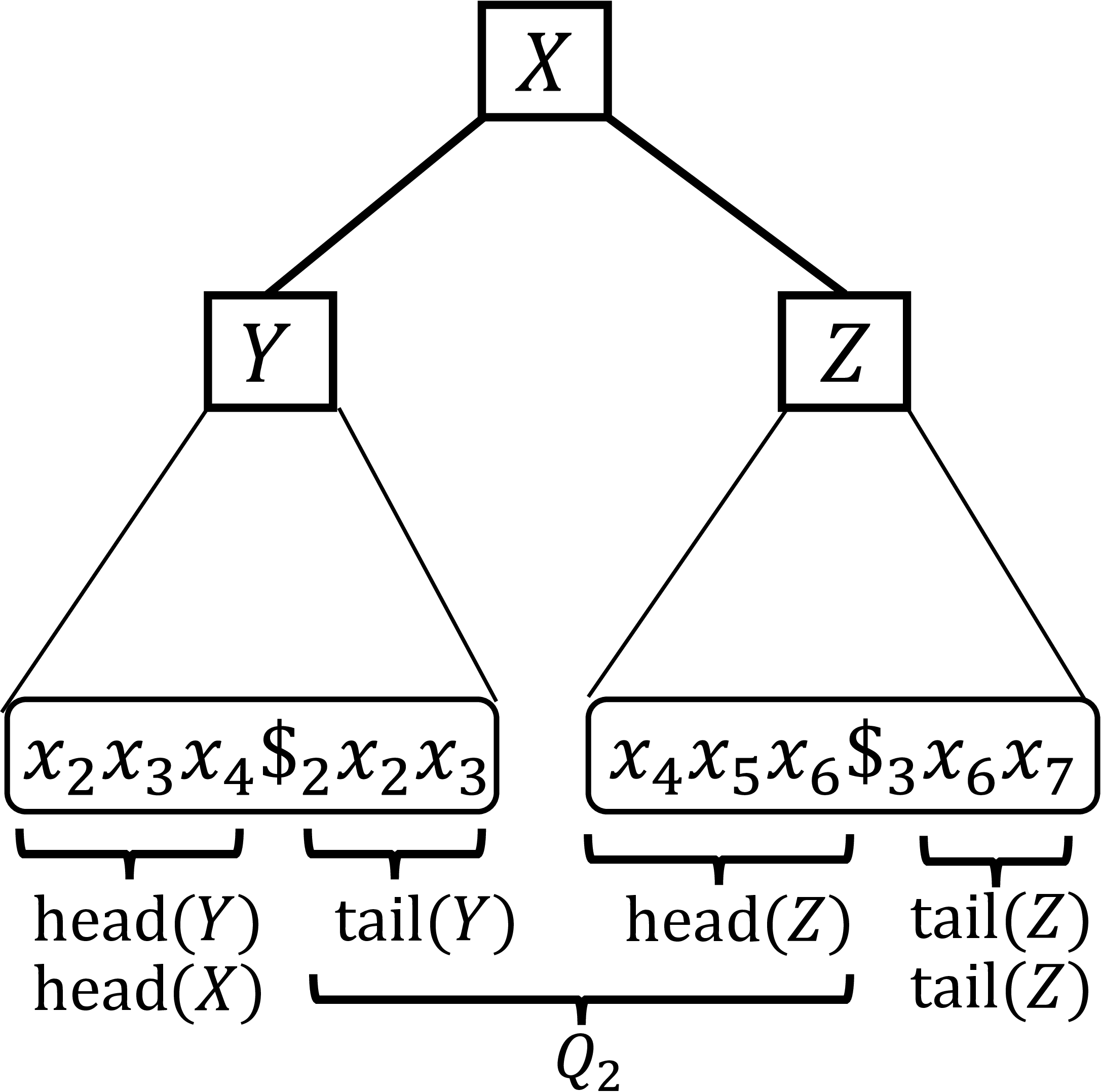}
    \end{minipage}
    \hfill
    \begin{minipage}[b]{0.4\linewidth}
        \centering
        \includegraphics[width=\linewidth]{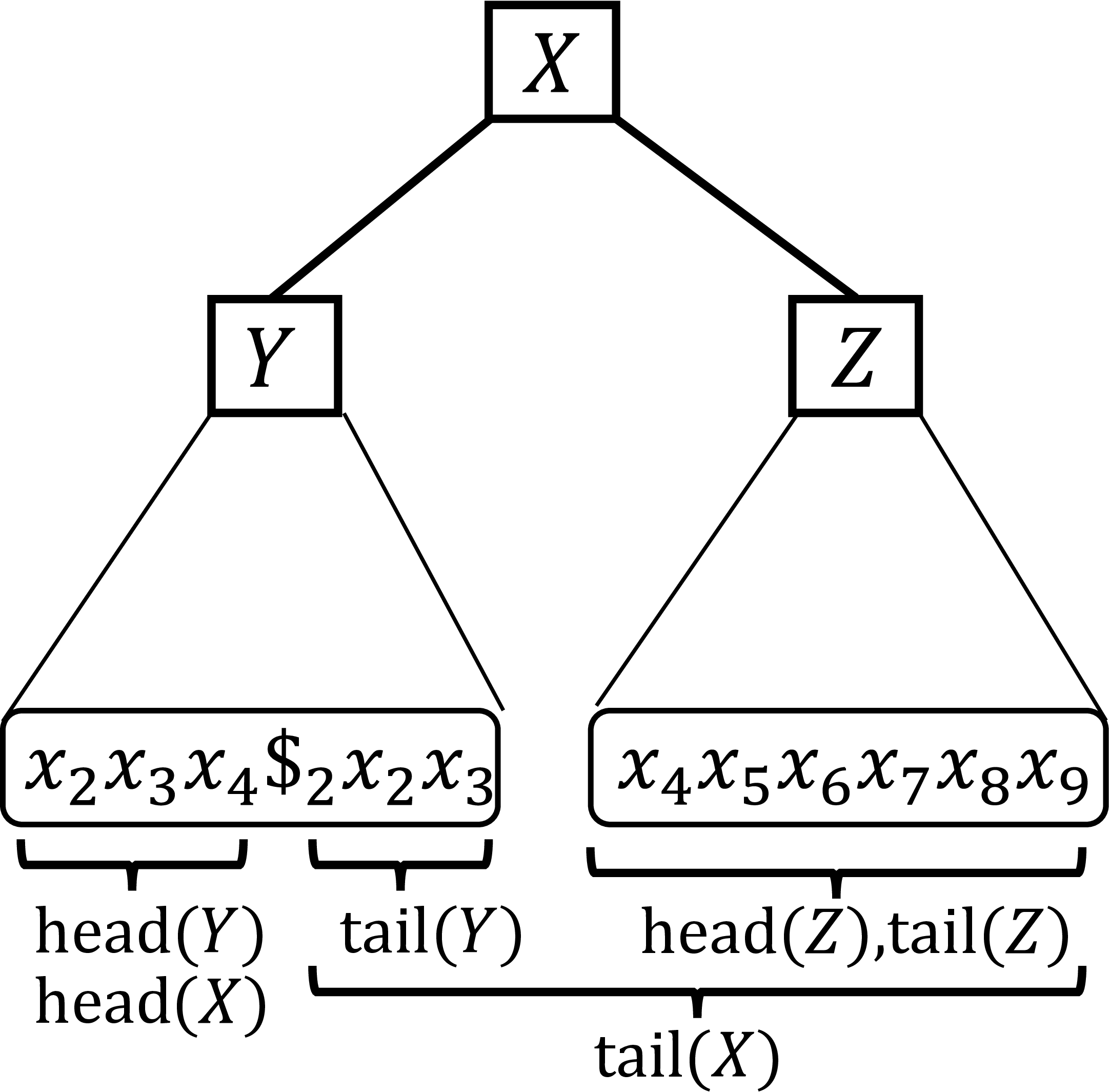}
    \end{minipage}
    \hfill
    \begin{minipage}[b]{0.05\linewidth}
    \end{minipage}
    \caption{
    Examples of bottom-up computation of $\val{\head{X}}$ and $\val{\tail{X}}$.  
    In the left example, the value $q_2$ is computed as $q_2 = \val{Q_2} = \val{\tail{Y} \cdot \head{Z}} = \val{\tail{Y}} \otimes \val{\head{Z}}$.
    }
    \label{fig:grammar_merge}
\end{figure}

We note that the lower bound also applies to run-length grammars~\cite{DBLP:conf/mfcs/NishimotoIIBT16}, which are an extension of CFGs allowing for run-length rules of the form $X \rightarrow Y^k$.  
We denote the size of the smallest run-length grammar generating $T$ by $\grlopt$.
Since the string $T$ constructed from the ORPP input is square-free, $\grlopt = \gopt$ holds.
Thus, for such $T$, we have $\grlopt \in \Omega(\zbeg \alpha(\zbeg))$.
This bound is tight even for optimal LZBE factorization; 
$\grlopt \in O(\zbeopt \alpha(\zbeopt))$ also holds for any $T$ since $\grlopt \leq \gopt$ and $\gopt \in O(\zbeopt \alpha(\zbeopt))$.

Conversely, there exists a string for which $\grlopt$ is asymptotically smaller than $\zbeopt$.
For example, the string $T = \mathtt{a}^n$ satisfies $\grlopt = O(1)$ and $\zbeopt = O(\log n)$,  
which implies that $\zbeopt \in \Omega(\grlopt \log n)$ in this case.  
This bound is tight, since $\gopt \in O(\grlopt \log n)$ and $\zbeopt \le \gopt$ together yield $\zbeopt \in O(\grlopt \log n)$ for any $T$.

\section{Data Structure for Efficient Random Access}

In this section, we introduce the following theorem.
\begin{theorem} \label{thm:LZBE_random_access}
Given an LZBE factorization $\calF = (F_1, \dots, F_\zbe)$ of a string $T$ of length $n$, 
we can construct, in $O(\zbe)$ time and space, a data structure of size $O(\zbe)$ that supports random access to any position of $T$ in $O(\log n)$ time.
\end{theorem}

An overview of our data structure is as follows.
We employ two central techniques: the \emph{interval-biased search tree (IBST)}~\cite{DBLP:journals/siamcomp/BilleLRSSW15} that is a variant of biased search trees~\cite{DBLP:journals/siamcomp/BentST85}, and \emph{symmetric centroid path decomposition (SymCPD)}~\cite{DBLP:journals/jacm/GanardiJL21} that is a generalization of the well-known heavy path decomposition~\cite{DBLP:journals/jcss/SleatorT83} for general DAGs.
We construct a global IBST over all factors in $\calF$ to simulate jump functions.
Additionally, we apply SymCPD to the dependency DAG $\calD$ of an LZBE factorization $\calF$, which yields a collection of \emph{heavy paths} with total length $O(\zbe)$.
For each heavy path, we build a separate IBST to accelerate the traversal along that path.

Throughout this section, we assume the standard RAM model.

\subsection{Building Blocks of the Data Structure}

\subsubsection{Interval-Biased Search Trees}
The interval-biased search tree (IBST)~\cite{DBLP:journals/siamcomp/BilleLRSSW15} is a weighted binary search tree that stores a sequence of consecutive disjoint intervals.
\begin{definition}[{Interval-Biased Search Tree (IBST)~\cite[Section 3]{DBLP:journals/siamcomp/BilleLRSSW15}}]
An interval-biased search tree is a binary tree that stores a collection of disjoint intervals
\[
\calI = \{ [a_0, a_1), [a_1, a_2), \dots, [a_{m-1}, a_m) \},
\]
where $a_0 < a_1 < \dots < a_m$ are strictly increasing integers.
Each node in the tree is associated with a single interval from $\calI$.
The tree is defined recursively as follows:
\begin{enumerate}
    \item The root stores the interval $[a_i, a_{i+1})$ such that the midpoint $(a_m - a_0) / 2$ lies within this interval, i.e., $(a_m - a_0)/2 \in [a_i, a_{i+1})$.
    \item The left child is the IBST constructed over the intervals $[a_0, a_1), \dots, [a_{i-1}, a_i)$, and the right child is the IBST constructed over $[a_{i+1}, a_{i+2}), \dots, [a_{m-1}, a_m)$.
\end{enumerate}
\end{definition}
We denote $v_i$ as the node storing $[a_i, a_{i+1})$.

Given a query value $q \in [a_0, a_m)$, we locate the interval in $\calI$ containing $q$ by traversing the IBST from the root.
At each node $v_i$, if $q \in [a_i, a_{i+1})$, we return it; otherwise, we move to the left child if $q < a_i$, or to the right child if $q \geq a_{i+1}$.  
Since each child subtree has at most half the total length of its parent, reaching a node $v_j$ from $v_i$ takes $O\left(\log \frac{L_i}{L_j}\right)$ steps, where $L_k$ is the total length of intervals in the subtree rooted at $v_k$.

Suppose that two boundaries $a_i$ and $a_j$ satisfying $a_i \leq q < a_j$ are known before querying $q$.
In this case, it is not necessary to start the search from the root of the IBST.  
Instead, we can efficiently locate the interval containing $q$ by exploring only the subtrees rooted at a few specific nodes.
We formalize this observation in the following lemma.
\begin{lemma} \label{lem:IBST_intervalsearch}
Let $a_i < a_j$ be any pair of interval boundaries.
By precomputing a constant number of nodes, referred to as \emph{hints}, associated with the interval $[a_i, a_j)$,
we can locate the interval $[a_x, a_{x+1}) \in \calI$ that contains a query value $q \in [a_i, a_j)$
in $O\left( \log \frac{a_j - a_i}{a_{x+1} - a_x} \right)$ time.
Furthermore, we can precompute hints for $h$ intervals in $O(m + h)$ time.
\end{lemma}
\begin{proof}
Let $\lca{v_x, v_y}$ denote the lowest common ancestor of nodes $v_x$ and $v_y$. 
We precompute the three hint nodes $v_c = \lca{v_i, v_{j-1}}$, $v_l = \lca{v_i, v_{c-1}}$, and $v_r = \lca{v_{c+1}, v_{j-1}}$. 
These three nodes are sufficient to perform the search without starting from the root of the IBST.

The algorithm first checks whether $q \in [a_c, a_{c+1})$. 
If so, we have found the target interval. 
Otherwise, if $q \in [a_l, a_c)$, we traverse the subtree rooted at $v_l$. 
If $q \in [a_{c+1}, a_r)$, we traverse the subtree rooted at $v_r$, similar to a standard top-down search. 
The initial check takes constant time, so we focus on the time complexity for traversing under $v_l$ and $v_r$. 
Consider the case where $q \in [a_l, a_c)$ and we traverse the subtree of $v_l$. 
Let $a_s$ be the starting boundary of the interval covered by the subtree rooted at $v_l$. 
Then $v_l$ covers $[a_s, a_c)$, its left subtree covers $[a_s, a_l)$, and its right subtree covers $[a_{l+1}, a_c)$ (See Figure~\ref{fig:IBST}). 
By the definition of IBST, the interval $[a_l, a_{l+1})$ stored by $v_l$ contains the midpoint $(a_c - a_s)/2$ of the total range.
Assume, for contradiction, that $(a_c + a_s)/2 < a_l$. 
Then $v_i$ lie in the left subtree of $v_l$, contradicting the assumption that $v_l$ is the lowest common ancestor of $v_i$ and $v_{c-1}$. 
Therefore we must have $a_l \leq (a_c + a_s)/2$, which implies $2a_l - a_c \leq a_s$. 
Since $a_i \leq a_l \leq a_c \leq a_j$, the length of the subtree is $O(a_j - a_i)$.
We start the traversal within an interval of length $O(a_j - a_i)$ and end in an interval of length $a_{x+1} - a_x$, 
Thus, the traversal time from $v_l$ is then $O(\log \frac{a_j - a_i}{a_{x+1} - a_x})$.
An analogous argument applies when traversing under $v_r$.
Since the off-line lowest common ancestor problem for a tree with $N$ nodes and $Q$ queries can be solved in $O(N + Q)$ time~\cite{DBLP:conf/focs/Harel80}, we can precompute hints for $h$ intervals in $O(m + h)$ time.
\end{proof}
Given interval boundaries $a_1, \dots, a_m$, we can construct the IBST in $O(m)$ time~\cite{DBLP:journals/siamcomp/BilleLRSSW15}. 

\begin{figure}[t]
    \centering
        \centering
        \includegraphics[width=0.8\linewidth]{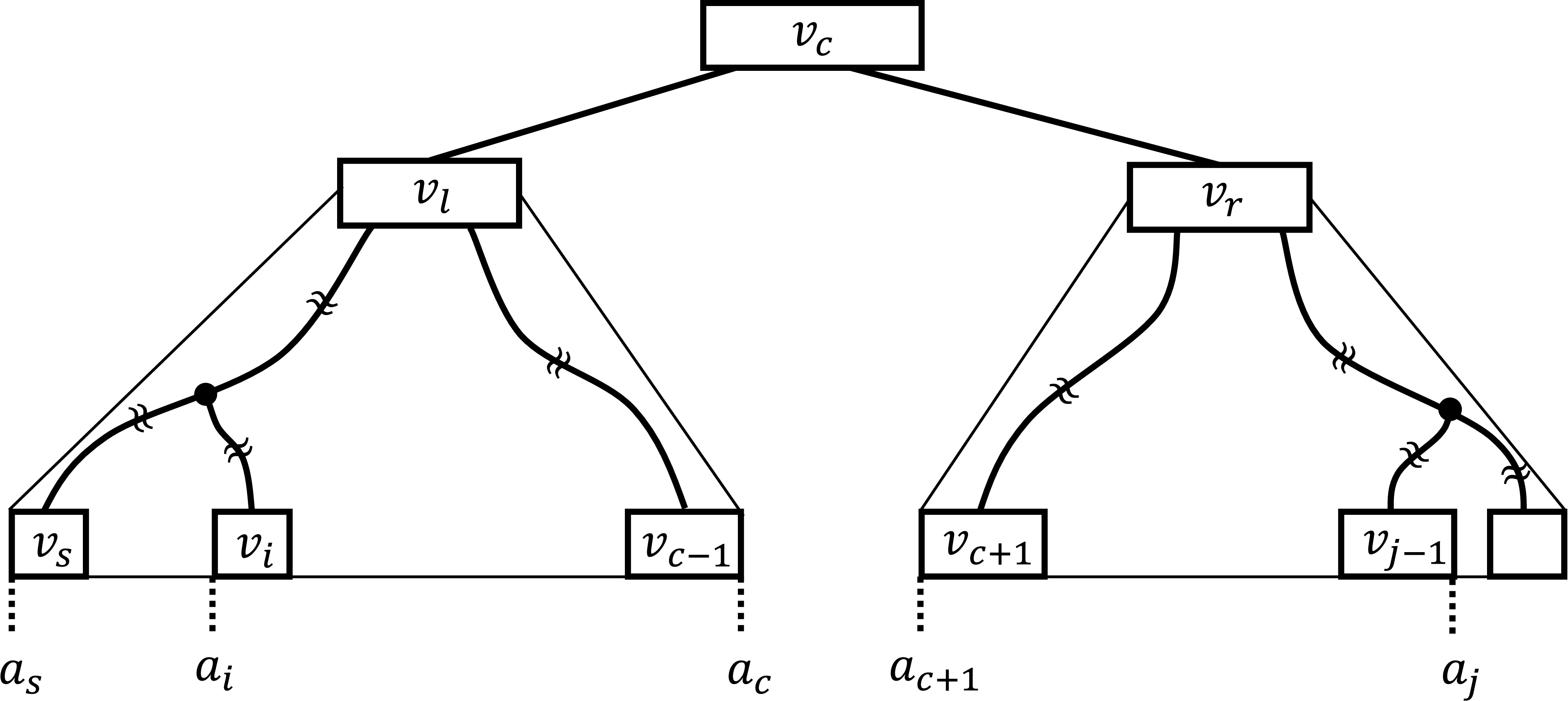}
    \caption{
    The subtree of the interval-biased search tree rooted at the node $v_c = \lca{v_i, v_{j-1}}$,  
    used to locate the interval containing a query value $q \in [a_i, a_j)$.
    The hint nodes $v_c$, $v_l = \lca{v_i, v_{c-1}}$, and $v_r = \lca{v_{c+1}, v_{j-1}}$ allow us to restrict the search to subtrees of limited range.
    }
    \label{fig:IBST}
\end{figure}

\subsubsection{Symmetric Centroid Path Decomposition}
The symmetric centroid path decomposition (SymCPD)~\cite{DBLP:journals/jacm/GanardiJL21} decomposes a DAG into disjoint paths, extending heavy path decomposition~\cite{DBLP:journals/jcss/SleatorT83} to arbitrary DAGs.
\begin{lemma}[{Symmetric Centroid Path Decomposition (SymCPD)~\cite[Lemma 2.1]{DBLP:journals/jacm/GanardiJL21}}] \label{lem:SymCPD}
For a DAG $\calD = \langle V, E \rangle$, 
there exists a pair of disjoint edge sets $H, L \subseteq E$ such that $L = E - H$, satisfying the following conditions:
\begin{enumerate}
    \item The edge-induced subgraph $\langle V, H \rangle$ consists of a set of disjoint paths.
    \item Any path in the DAG $\calD$ contains at most $2 \lg n_\calD$ edges from $L$,
\end{enumerate}
where $n_\calD$ denotes the number of maximal paths in $\calD$.
\end{lemma}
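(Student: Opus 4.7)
The plan is to construct $H$ and $L$ explicitly via a two-sided centroid rule, then verify the two required conditions separately. For each vertex $v$, define $r(v)$ to be the number of directed paths from a source of $\calD$ to $v$ (with $r(v)=1$ when $v$ is a source, and $r(v)=\sum_{(u,v)\in E} r(u)$ otherwise), and dually let $\ell(v)$ be the number of directed paths from $v$ to a sink. Then $n_\calD=\sum_{\text{source } s}\ell(s)$; moreover $\ell$ is non-increasing and $r$ is non-decreasing along every directed path in $\calD$. I declare an edge $e=(u,v)$ to be heavy (placed in $H$) exactly when both $\ell(v)>\ell(u)/2$ and $r(u)>r(v)/2$ hold, and put every other edge into $L$.

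Property~1 follows from two counting identities. The identity $\ell(u)=\sum_{(u,w)\in E}\ell(w)$, valid at any non-sink $u$, forces at most one out-neighbor $w$ of $u$ to satisfy $\ell(w)>\ell(u)/2$, so $u$ has at most one heavy out-edge; symmetrically $r(v)=\sum_{(u,v)\in E} r(u)$ yields at most one heavy in-edge at each non-source $v$. Together with the acyclicity of $\calD$, these in- and out-degree bounds at each vertex of $\langle V,H\rangle$ force it to decompose into vertex-disjoint simple paths.

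For Property~2, I extend an arbitrary directed path $P\subseteq\calD$ to a maximal source-to-sink path $\tilde P=(v_0,\dots,v_m)$; every edge light on $P$ is still light on $\tilde P$, so it suffices to bound the number of light edges of $\tilde P$. Classify each light edge $(v_i,v_{i+1})$ as \emph{out-light} when $\ell(v_{i+1})\le \ell(v_i)/2$ and as \emph{in-light} when $r(v_i)\le r(v_{i+1})/2$; by definition every light edge is of at least one kind. Setting $a_i=\log_2\ell(v_i)$, the sequence $(a_i)$ is non-increasing with $a_0\le\log_2 n_\calD$ and $a_m=0$, and each out-light edge contributes at least $1$ to the telescoping sum $\sum_i(a_i-a_{i+1})=a_0-a_m$, so there are at most $\log_2 n_\calD$ out-light edges on $\tilde P$. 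A parallel argument on $\log_2 r(v_i)$ (non-decreasing, starting at $0$ and ending at most $\log_2 n_\calD$) bounds the number of in-light edges by $\log_2 n_\calD$, and summing the two gives the promised total of at most $2\lg n_\calD$ light edges.

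I do not anticipate a substantive obstacle; the only real conceptual step is choosing the pair $(\ell,r)$ so that the heavy condition decouples into two independent centroid inequalities, one controlling the forward direction and one the backward direction. Once this decoupling is in place, both properties reduce to the summation identities above together with two one-sided potential arguments along $\tilde P$.
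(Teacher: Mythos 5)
Your proof is correct, and since the paper imports this lemma as a black box from Ganardi--Je\.z--Lohrey without reproving it, your argument serves as a valid self-contained substitute. The structure is the intended one: two path-counting potentials ($\ell(v)$ counting maximal paths starting at $v$, $r(v)$ counting those ending at $v$), a two-sided centroid rule for heavy edges, degree-one-in/degree-one-out plus acyclicity for Property~1, and two telescoping arguments over $\log_2\ell$ and $\log_2 r$ along a source-to-sink extension for Property~2. All the supporting facts check out: $\ell(u)=\sum_{(u,w)\in E}\ell(w)$ does force at most one out-neighbor with $\ell(w)>\ell(u)/2$; every light edge fails at least one of the two halving conditions by De Morgan; and $\ell(v_0)\le n_\calD$, $r(v_m)\le n_\calD$, $\ell(v_m)=r(v_0)=1$ give the endpoints of the telescopes. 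One small point of divergence worth flagging: the criterion you use (strict halving, $\ell(v)>\ell(u)/2$ and $r(u)>r(v)/2$) is not literally the one the paper's Appendix~B attributes to the original construction, namely $\lfloor\lg\ell(u)\rfloor=\lfloor\lg\ell(v)\rfloor$ and $\lfloor\lg r(u)\rfloor=\lfloor\lg r(v)\rfloor$; the two rules classify some edges differently (e.g.\ $\ell(u)=4$, $\ell(v)=3$ is heavy under yours, light under theirs) but both satisfy the same two properties with the same $2\lg n_\calD$ bound by essentially the same counting. If your decomposition were to be plugged into Theorem~\ref{thm:SymCPD_lineartime}, the heavy-edge test there would need to be adjusted to your rule, which is harmless since your rule is also checkable in constant time per candidate edge.
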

An edge in $H$ is called a \emph{heavy edge}, and a maximal path (including a path consisting of a single node) in the subgraph $\langle V, H \rangle$ is called a \emph{heavy path}. 
An edge in $L$ is called a \emph{light edge}.
Note that the total length of all heavy paths is bounded by $O(|V|)$, since each vertex has at most one outgoing heavy edge.

We apply this technique to the dependency DAG $\calD$ of an LZBE factorization $\calF$.
Since each maximal path corresponds to a jump sequence and the total number of jump sequences is $n$, we have $n_\calD \leq n$, and thus any path in $\calD$ contains at most $2 \lg n$ light edges.  
The next lemma shows that all heavy paths of the dependency DAG $\calD$ can be computed in $O(\zbe)$ time.

\begin{lemma} \label{lem:SymCPD_lineartime}
Given an LZBE factorization $\calF = (F_1, \dots, F_\zbe)$, all heavy paths in the dependency DAG $\calD$ can be computed in $O(\zbe)$ time.
\end{lemma}
\begin{proof}
The original symmetric centroid path decomposition selects an edge $(u, v)$ as a heavy edge if it satisfies the conditions $\lfloor \lg \ns{u} \rfloor = \lfloor \lg \ns{v} \rfloor$ and $\lfloor \lg \ne{u} \rfloor = \lfloor \lg \ne{v} \rfloor$, where $\ns{v}$ and $\ne{v}$ denote the number of maximal paths starting at and ending at node $v$, respectively.  
For any node $u$, the heavy edge among its outgoing edges points to the child with the largest value of $\ns{v}$.

Let $s_1, \dots, s_{\zbe}$ and $e_1, \dots, e_{\zbe}$ be arrays representing $\ns{F_i}$ and $\ne{F_i}$ for each factor $F_i \in \calF$.  
We initialize all entries to $0$, and compute the values as follows:
To compute $s_i$ (number of paths starting from $F_i$), process $i = 1, \dots, \zbe$ in increasing order:
\begin{itemize}
    \item If $F_i$ is a char factor, set $s_i \gets 1$.
    \item If $F_i$ is a copy factor referring to $F_j \cdots F_k$, set $s_i \gets \sum_{i'=j}^k s_{i'}$.
\end{itemize}
To compute $e_i$ (number of paths ending at $F_i$), process $i = \zbe, \dots, 1$ in decreasing order:
\begin{itemize}
    \item If $e_i = 0$, set $e_i \gets 1$.
    \item If $F_i$ is a copy factor referring to $F_j \cdots F_k$, update $e_{i'} \gets e_{i'} + e_i$ for all $i' = j$ to $k$.
\end{itemize}
Both procedures can be implemented in $O(\zbe)$ time using a prefix-sum technique.

Once the values of $s_i$ and $e_i$ are computed, we determine, for each copy factor $F_i$ referring to $F_j \cdots F_k$, the index $i' \in [j, k]$ that maximizes $s_{i'}$.
This corresponds to solving a range maximum query on each interval $[j, k]$.  
All such queries can be answered in total $O(\zbe)$ time by reducing them to offline lowest common ancestor (LCA) queries on a Cartesian tree~\cite{DBLP:journals/cacm/Vuillemin80} and applying a linear-time offline LCA algorithm~\cite{DBLP:conf/focs/Harel80}.
For each resulting candidate edge from $F_j$ to $F_i$, we then check whether it satisfies the heavy edge condition.
Since the number of such candidate edges is $O(\zbe)$, this step also runs in linear time.  
Thus, all heavy edges in the dependency DAG can be computed in $O(\zbe)$ time.  
Once all heavy edges are identified, the corresponding heavy paths can be constructed straightforwardly.
\end{proof}

\subsection{Our Data Structure}
In this subsection, we propose our data structure.
It mainly consists of two components: a global IBST to simulate the standard jump function, and heavy-path IBSTs to accelerate sequence of jump operations on the same heavy path.

We begin by simulating the jump function using an IBST.
Given an LZBE factorization $\calF$, 
we construct an IBST over the set of intervals $\{ [\posL{F}, \posR{F}] \mid F \in \calF \}$, which correspond to the positions of the factors in the output string.
For each $1 \leq i \leq \zbe$, we also precompute the hint nodes described in Lemma~\ref{lem:IBST_intervalsearch} for the interval $[\srcL{F_i}, \srcR{F_i}]$ to accelerate the search.
Since both the number of intervals and hints are $O(\zbe)$, the data structure can be stored in $O(\zbe)$ space.
Given a copy factor $F \in \calFcopy$ and a relative position $1 \leq r \leq |F|$, we can compute the jump function $\jump{F, r}$ by locating the interval that contains the position $\srcL{F} + r - 1$.
Using the hints for the interval $[\srcL{F}, \srcR{F}]$, we can find the corresponding factor $F'$ in $O\left(\log \frac{|F|}{|F'|}\right)$ time, where $F'$ is the next factor in the jump sequence.

We then focus on accelerating a sequence of jump operations on the same heavy path.
Let $P$ be a heavy path obtained by applying SymCPD to the dependency DAG~$\calD$ of an LZBE factorization.  
We denote the sequence of factors of $P$ as $(F_{i_1}, \dots, F_{i_{\ell}})$.
For any $1 \leq j < \ell$, we introduce two intervals $I_j^L$ and $I_j^R$, defined as follows:
\begin{itemize}
    \item $I_j^L = [\posL{F_{i_j}}, \posL{F_{i_j}} + \posL{F_{i_{j+1}}} - \srcL{F_{i_j}} - 1]$,
    \item $I_j^R = [
    \posR{F_{i_j}} 
    + \posR{F_{i_{j+1}}}
    - \srcR{F_{i_j}}
    + 1, \posR{F_{i_j}}]$.
\end{itemize}
For any $p \in [\posL{F_{i_j}}, \posR{F_{i_j}}] \setminus (I_j^L \cup I_j^R)$, the position $p$ refers to the next factor on the heavy path; that is, the factor corresponding to $\jump{\rel{p}}$ is $F_{i_{j+1}}$.
Conversely, for any $p \in I_j^L \cup I_j^R$, the position $p$ refers to the other factor.

Consider a factor $F_{i_s}$ on the heavy path $P$, a relative position $1 \leq r_s \leq |F_{i_s}|$, and the jump sequence starting from $(F_{i_s}, r_s)$.  
The beginning of the jump sequence takes the form $(F_{i_s}, r_s), \dots, (F_{i_e}, r_e)$, where each factor is in the same heavy path $P$.
The sequence terminates either when $(F_{i_e}, r_e)$ is the last element (i.e., $e = \ell$), or when $\jump{F_{i_e}, r_e}$ does not belong to the heavy path $P$.  
In the former case, the absolute position $\abs{F_{i_e}, r_e}$ lies within $[\posL{F_{i_e}}, \posR{F_{i_e}}]$, and in the latter case, it lies within either $I_e^L$ or $I_e^R$.  
We refer to this interval as the \emph{exit interval}, and $(F_{i_e}, r_e)$ as the \emph{exit position} of $(F_{i_s}, r_s)$.
Figure~\ref{fig:heavy_skip} illustrates an example of a heavy path and the corresponding exit intervals.

\begin{figure}[t]
    \centering
    \begin{minipage}[b]{0.60\linewidth}
        \centering
        \includegraphics[width=\linewidth]{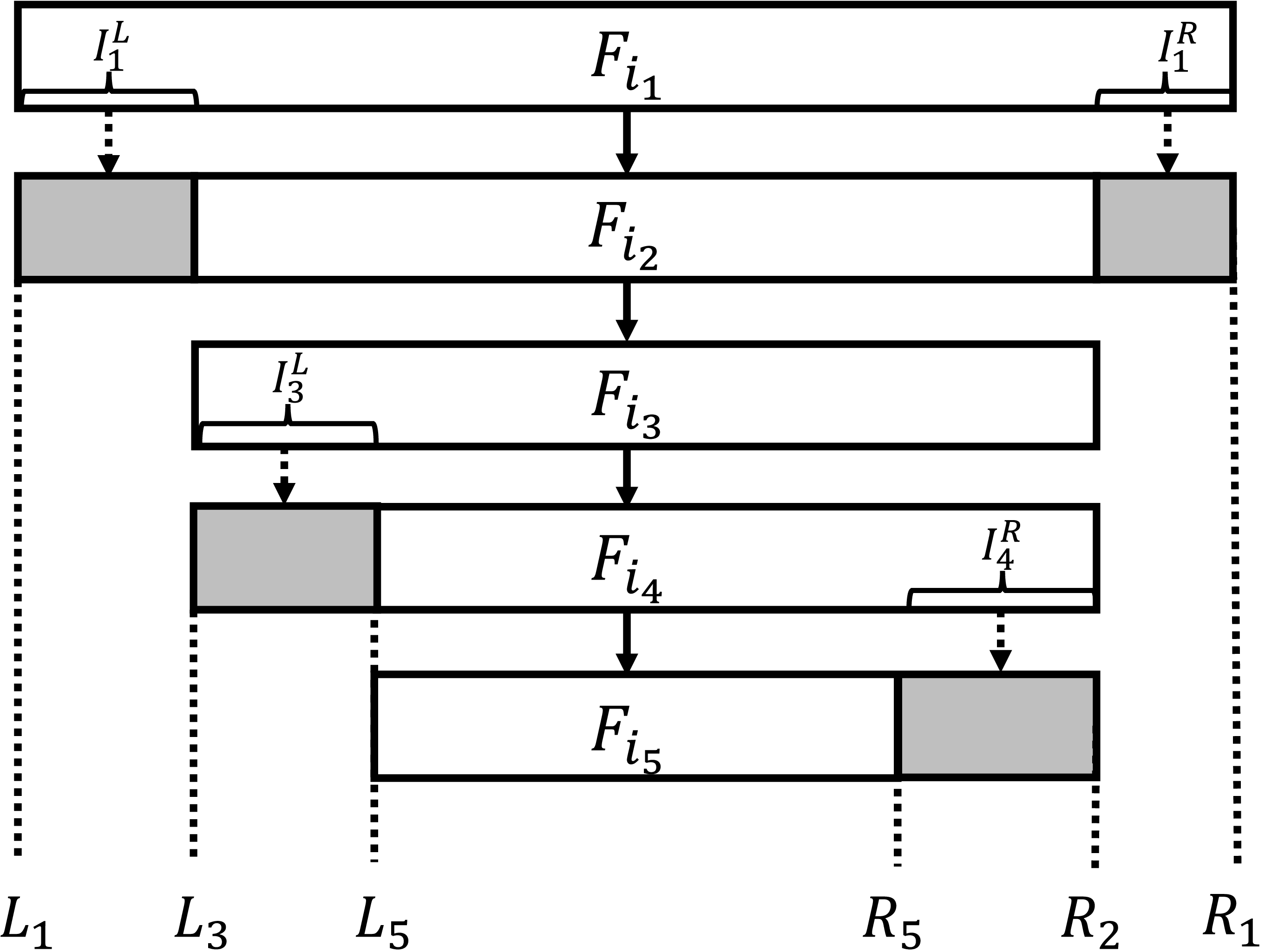}
    \end{minipage}
    \caption{
    An illustration of a heavy path in the dependency DAG.
    The path consists of five nodes $F_{i_1}$ through $F_{i_5}$, with solid arrows for heavy edges and dotted arrows for light edges.  
    Factors not belonging to this heavy path are shown in gray.
    The jump sequence from $(F_{i_1}, r_1)$, where $r_1 \in [R_5, R_2)$, has exit interval $I_4^R$ and exit position $(F_{i_4}, r_4)$, with $r_4 = R_5 - r_1$.
    }
    \label{fig:heavy_skip}
\end{figure}

We now present the following lemma.
\begin{lemma} \label{lem:heavy_skip}
For a heavy path $P = (F_{i_1}, \dots, F_{i_\ell})$,
there exists a data structure of size $O(\ell)$  
that, given a factor $F_{i_s}$ and a relative position $1 \leq r_s \leq |F_{i_s}|$,  
computes both the exit position $(F_{i_e}, r_e)$ and the corresponding exit interval $I$ in time $O\left( \log \frac{|F_{i_s}|}{|I|} \right)$.
\end{lemma}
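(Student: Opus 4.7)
The plan is to reduce the query to a single interval-location search over a nested family of intervals, which is then supported by one IBST with precomputed hints. For each $1 \le j \le \ell - 1$, define the \emph{through-range} of $F_{i_j}$ as the relative positions whose jump lands in $F_{i_{j+1}}$, namely $[u_j, v_j]$ with $u_j := \posL{F_{i_{j+1}}} - \srcL{F_{i_j}} + 1$ and $v_j := \posR{F_{i_{j+1}}} - \srcL{F_{i_j}} + 1$; its complement $[1, u_j - 1] \cup [v_j + 1, |F_{i_j}|]$ is exactly $I_j^L \cup I_j^R$ in relative coordinates. Whenever a jump stays on the heavy path it is the constant shift $r_{j+1} = r_j - (u_j - 1)$.

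Let $T_j := \sum_{k=1}^{j-1}(u_k - 1)$ and, for a query pair $(F_{i_s}, r_s)$, set $r^{*}_s := r_s + T_s$. A direct computation shows that passing through step $j$ becomes the condition $r^{*}_s \in [A_j, B_j]$ with $A_j := u_j + T_j$ and $B_j := v_j + T_j$, and the intervals $[A_1, B_1] \supseteq [A_2, B_2] \supseteq \cdots \supseteq [A_{\ell-1}, B_{\ell-1}]$ form a nested sequence with $B_j - A_j + 1 = |F_{i_{j+1}}|$. Hence the exit step $e$ is the smallest index $j \ge s$ with $r^{*}_s \notin [A_j, B_j]$, or $e = \ell$ if no such $j$ exists; from $e$ and the side on which $r^{*}_s$ leaves $[A_e, B_e]$ one reads off both $r_e = r^{*}_s - T_e$ and the exit interval ($I_e^L$, $I_e^R$, or $[\posL{F_{i_\ell}}, \posR{F_{i_\ell}}]$).

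Using this nesting, I would partition $[A_1, B_1 + 1)$ into the $2(\ell - 1) + 1$ disjoint pieces $[A_{j-1}, A_j)$ and $[B_j + 1, B_{j-1} + 1)$ for $j = 2, \ldots, \ell - 1$, together with the central piece $[A_{\ell-1}, B_{\ell-1} + 1)$, each labelled with the corresponding exit position and exit interval. I would then build one IBST over this partition and, by Lemma~\ref{lem:IBST_intervalsearch}, precompute a hint for each sub-range $[A_s, B_s + 1)$ for $s = 1, \ldots, \ell - 1$; the total size is $O(\ell)$.

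A query $(F_{i_s}, r_s)$ then proceeds as follows: if $s = \ell$ or $r_s \notin [u_s, v_s]$, return in $O(1)$ time directly; otherwise compute $r^{*}_s$ and query the IBST using the hint for $[A_s, B_s + 1)$. Since this hint range has length $|F_{i_{s+1}}| \le |F_{i_s}|$, Lemma~\ref{lem:IBST_intervalsearch} bounds the query time by $O(\log(|F_{i_{s+1}}|/|I|)) \subseteq O(\log(|F_{i_s}|/|I|))$. The main point requiring careful verification is the nesting of the $[A_j, B_j]$ together with the correctness of the constant-shift transformation of relative positions; once those are established, the rest reduces mechanically to Lemma~\ref{lem:IBST_intervalsearch}.
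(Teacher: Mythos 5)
Your proof is correct and is essentially the paper's own construction in different coordinates: your $T_j$ and the boundaries $A_j, B_j$ coincide (up to a unit shift) with the paper's $L_j, R_j$ and its interval family $\calI'$, and the hint-per-starting-factor IBST query via Lemma~\ref{lem:IBST_intervalsearch} is the same. The only deviations are cosmetic --- you dispatch the immediate-exit case $r_s \notin [u_s, v_s]$ in $O(1)$ and use the slightly tighter hint range of length $|F_{i_{s+1}}|$ instead of $|F_{i_s}|$, you should (as the paper does) discard empty pieces before building the IBST, and the piece count is $2\ell-3$ rather than $2(\ell-1)+1$, which does not affect the $O(\ell)$ bound.
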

\begin{proof}
For $1 \leq j \leq \ell$, we define $L_j$ and $R_j$ as follows:
\begin{align*}
    L_j &= \begin{cases}
        0 & \text{if } j = 1, \\
        L_{j-1} + |I_{j-1}^L| & \text{otherwise},
    \end{cases} \\
    R_j &= \begin{cases}
        L_j + |F_{i_j}| & \text{if } j = \ell, \\
        R_{j+1} + |I_j^R| & \text{otherwise}.
    \end{cases}
\end{align*}
Note that $R_j = L_j + |F_{i_j}|$ holds for any $j$ and the sequence $L_1, \dots, L_\ell, R_\ell, \dots, R_1$ is monotonically non-decreasing.
Using $L_j$ and $R_j$, we construct a set of disjoint consecutive intervals
$\calI'=\{ [L_1, L_2), \dots, [L_{\ell-1}, L_\ell), [L_\ell, R_\ell), [R_\ell, R_{\ell-1}), \dots, [R_2, R_1) \}$.

Given a query position $(F_{i_s}, r_s)$, we define $q = L_s + r_s - 1$ and find the interval in $\calI'$ that contains $q$.
Let $[L_j, L_{j+1})$, $[L_j, R_j)$, or $[R_{j+1}, R_j)$ be the matching interval.
In all cases, we have $j = e$, and the relative position is given by $r_e = q - L_j + 1 = L_s - L_j + r_s$.  
The corresponding exit interval is determined as follows:
\begin{itemize}
    \item If $q \in [L_j, L_{j+1})$, then the exit interval is $I_j^L$.
    \item If $q \in [R_{j+1}, R_j)$, then the exit interval is $I_j^R$.
    \item If $q \in [L_j, R_j)$ (in which case $j = e = \ell$), then the exit interval is $[\posL{F_{i_\ell}}, \posR{F_{i_\ell}}]$.
\end{itemize}
To find the interval containing $q$, we construct an IBST over $\calI'$ (excluding empty intervals).
Additionally, for each $1 \leq j \leq \ell$, we precompute the hint nodes for the interval $[L_j, R_j)$ to accelerate the search.
Given a query $(F_{i_s}, r_s)$, we identify the containing interval using the hints.
The length of the target interval corresponds to that of the exit intervals.
Thus, by Lemma~\ref{lem:IBST_intervalsearch}, we can compute the exit position in the claimed time complexity.
\end{proof}
For each heavy path in $\calF$, we construct the data structure to support efficient jump operations.

We combine these two data structures to support random access in $O(\log n)$ time.  
Our final structure consists of the global IBST $\calB_{\calF}$ over all factors in $\calF$ and a local IBST $\calB_P$ for each heavy path $P$. 
In addition, for each factor, we store its associated heavy path and its position within the path to facilitate efficient traversal.
We also precompute hint nodes for all IBSTs.  
For the global IBST $\calB_\calF$, we prepare hints for each interval $[\posL{F}, \posR{F}]$ of every factor $F$ and for all exit intervals.  
For each heavy-path IBST $\calB_P$, we precompute hints for the intervals $[L_i, R_i]$ where $1 \leq i < \ell$.

\begin{algorithm}[t]
    \caption{Algorithm for computing $T[p]$.}
    \label{alg:faster_random_access}
    \begin{algorithmic}[1]
        \State Compute $(F_1, r_1) \gets \rel{p}$ using $\calB_\calF$
        \State $t \gets 1$
        \While{$F_t$ is a copy factor}
            \State Let $P$ be the heavy path containing $F_t$
            \State Compute the exit position $(F_t', r_t')$ and the exit interval $I_t$ of $(F_t, r_t)$ using $\calB_P$ \label{lin:heavy_skip}
            \State Compute $(F_{t+1}, r_{t+1}) \gets \jump{F_t', r_t'}$ using $\calB_\calF$ \label{lin:light_edge}
            \State $t \gets t + 1$
        \EndWhile \\
        \Return \text{the character represented by} $F_t$
    \end{algorithmic}
\end{algorithm}

The random access algorithm is shown in Algorithm~\ref{alg:faster_random_access}.  
Given a position $p$, we first locate the factor $F_1$ such that $p \in [\posL{F_1}, \posR{F_1}]$, and compute the relative offset $r_1 = p - \posL{F_1} + 1$ using $\calB_{\calF}$.  
In each iteration, we use the IBST $\calB_P$ for the heavy path containing $F_t$ to compute the exit position $(F_t', r_t')$ and the corresponding exit interval $I_t$, using the precomputed hints for $F_t$.
We then determine the next jump target $(F_{t+1}, r_{t+1})$ using $\calB_{\calF}$ with the hints for $I_t$.  
The process continues until a char factor is reached.

Each IBST query after the first uses a precomputed hint and runs in $O(\log(X/Y))$ time,  
where $X$ is the length of the hint interval and $Y$ is the length of the interval found in the current query.  
By construction, the hint interval for each query is set to match the length of the interval found in the previous query.  
Thus, each query takes $O\left(\log \frac{\ell_t}{\ell_{t+1}}\right)$ time, where $\ell_t$ is the length of the interval found at the $t$-th query.  
Due to the structure of LZBE, each jump transitions to a factor no longer than the current one.  
Consequently, the sequence of interval lengths $\ell_1, \dots, \ell_{k+1}$ is non-increasing, where $k$ is the total number of jumps.
Furthermore, since a jump sequence corresponds to a path in the dependency DAG, $k$ is bounded by the number of heavy paths and light edges traversed along this path.  
By properties of heavy path decomposition, this gives $k = O(\log n)$.
Therefore, applying the telescoping sum technique, the total query time is bounded by
$\sum_{t=1}^{k} O\left(\log \frac{\ell_t}{\ell_{t+1}}\right) = O\left(k + \log \frac{\ell_1}{\ell_{k+1}}\right) = O(\log n)$.
The total number of intervals and hints used in both the global IBST and the heavy path IBSTs is $O(\zbe)$.  
Hence, the overall space complexity is $O(\zbe)$, and all components can be constructed in $O(\zbe)$ time.  
Since the mapping between factors and heavy paths can also be computed within the same bounds,  
the resulting data structure satisfies Theorem~\ref{thm:LZBE_random_access}.

\section{Linear-time Computation of the Greedy LZBE Factorization} \label{se:longest_greedy}
In this section, we present the following theorem.
\begin{theorem}
Given a string $T$ of length $n$, the greedy LZBE factorization of $T$ can be computed in $O(n)$ time.
\end{theorem}
We assume the word-RAM model with word size $\Omega(\log n)$, and a linearly-sortable alphabet.

Given a string $T$ and its greedy LZBE factorization of a prefix $T[1, p] = F_1 \cdots
F_k$, 
we define the \emph{extended factors} $E_1, \dots, E_k$ as follows:
\[
E_i = \begin{cases}
F_i F_{i+1} & \text{if } F_i \in \{ E_1, \dots, E_{i-1} \}, \\
F_i & \text{otherwise}.
\end{cases}
\]
If the last factor $E_k$ is equal to some earlier extended factor (i.e., $E_k \in \{ E_1, \dots, E_{k-1} \}$), we do not define $E_k$.
We begin by establishing several combinatorial properties of the extended factors that arise in the greedy LZBE factorization.
\begin{lemma} \label{lem:extended_factor_distinct}
No string appears more than twice as an extended factor.
\end{lemma}
\begin{proof}
We prove this lemma by showing that any two non-consecutive extended factors (i.e., those with $|i - j| \geq 2$) must be distinct.  
Assume for contradiction that $E_i = E_j$ for some $i + 1 < j$.
If $E_j = F_j$, then $E_i = E_j$ implies that $E_j$ appears in $\{ E_1, \dots, E_{j-1} \}$,
contradicting the definition of extended factors.  
Therefore, it must be that $E_j = F_j F_{j+1}$.  
In that case, since $E_i = E_j = F_j F_{j+1}$ appears earlier than $\posL{F_j}$ as a consecutive sequence of factors, choosing $F_j$ as the next factor violates the longest-greedy condition, as $E_i$ would have been the longer valid choice.  
Thus, non-consecutive extended factors must be distinct.
\end{proof}

\begin{lemma}
Let $F_1 \cdots F_k$ be the greedy LZBE factorization of $T[1, p]$.  
If $F_{k+1}$ is a copy factor and its leftmost valid source occurrence is $F_i \cdots F_j$, then $F_{k+1}$ begins with $E_i$.
\end{lemma}
\begin{proof}
If $i < j$, then $F_{k+1}$ begins with $F_i F_{i+1} = E_i$, so the claim holds.
If $E_i = F_i$, the claim also clearly holds.  
It remains to consider the case where $E_i = F_i F_{i+1}$ and $F_{k+1} = F_i$.
In this case, $F_i$ already appears as an earlier extended factor.  
Thus, $F_i$ would not be chosen as the start of a new factor since the algorithm selects the leftmost valid occurrence.
Therefore, this case does not occur, and the claim holds.
\end{proof}

We begin by describing a simple algorithm that does not provide strong theoretical guarantees.
This algorithm maintains a trie representing all extended factors computed so far.
Suppose we have already computed the factorization $F_1 \cdots F_k$ for the prefix $T[1, p]$.
To determine the next factor starting at position $p+1$, we follow the path in the trie along the suffix $T[p+1, n]$.
For each extended factor $E_i$ on this path, we attempt to extend $E_i$ to the longest
prefix of $T[p+1, n]$ that can be expressed as a concatenation $F_i \cdots F_j$ for some $i \le j$.
This is done by computing the longest common prefix (LCP) length $d$ between $T[p+1, n]$ and $T[\posL{F_i}, n]$, and locating the factor containing the position $\posL{F_i} + d$.
Each such computation yields a candidate factor $F_i \cdots F_j$, and we select the longest among
them as $F_{k+1}$.
We can answer LCP queries in constant time by using an LCP array~\cite{DBLP:journals/siamcomp/ManberM93} together with a range minimum data structure~\cite{DBLP:conf/latin/BenderF00}, both of which can be constructed in $O(n)$ time and space.

Although conceptually simple, this approach offers no strong time bound.
In the worst case, the path traversal for computing $F_{k+1}$ may traverse far beyond the length of the factor eventually selected.
A straightforward analysis shows that the running time becomes $O(n^2)$, and it is unclear whether a faster
guarantee can be obtained for this method.
We therefore turn to a refined approach that achieves an $O(n)$ running time.

To obtain a linear time algorithm, we replace the extended factor trie with the suffix tree ST~\cite{DBLP:conf/focs/Weiner73} of $T$ and augment it with additional structures.
We construct a Weighted Level Ancestor  (WLA) structure~\cite{DBLP:conf/cpm/BelazzouguiKPR21} and an incremental Nearest Marked Ancestor (NMA) structure~\cite{DBLP:conf/focs/AlstrupHR98} on the ST.
Together, these allow us to locate the locus of any substring and to maintain marks on the loci with amortized constant time updates and queries.
All components can be constructed in $O(n)$ time and stored in $O(n)$ space.
In the algorithm, each extended factor $E_i$ is represented by a mark at the corresponding locus.
With this refinement, enumerating candidate extended factors is replaced by walking up from the leaf representing the suffix $T[p+1, n]$ and repeatedly computing NMAs.
For each retrieved mark corresponding to an extended factor, we compute the LCP value between the suffix and extendend factor, and obtain a candidate of the next factor.
After selecting $F_{k+1}$, we insert marks for the newly created extended factors, and the NMA structure maintains them.

We now analyze the running time of this algorithm.
Consider computing $F_{k+1}$ starting from the position $p+1$, and let $d$ be the maximum depth of any marked locus on the path representing $T[p+1, n]$.
Since each locus can carry at most two marks by Lemma~\ref{lem:extended_factor_distinct}, the path contains at most $2d$ marks in total.
The deepest mark on the path corresponds to an extended factor that matches a prefix of $T[p+1, n]$ of length $d$, which implies that the length of the next factor $F_{k+1}$ is at least $d$.
Consequently, the number of candidate extended factors that must be examined is $O(|F_{k+1}|)$.
Because each LCP, WLA, NMA, and marking operation runs in amortized constant time, computing $F_{k+1}$ and updating the data structures takes amortized $O(|F_{k+1}|)$ time.
Since the total length of all factors is $n$, summing over all iterations shows that the greedy LZBE
factorization can be computed in overall $O(n)$ time.

\section{Conclusion}

We introduced \emph{LZ-Begin-End} (LZBE), a new LZ-like factorization where every copy factor refers to a contiguous sequence of preceding factors.
On the structural side, we showed that any CFG of size $g$ can be converted into an LZBE factorization of size at most $g$, implying $\zbeopt \le \gopt$.
We further established a separation in the other direction, proving that there exist strings for which $\gopt \in \Omega(\zbeg \alpha(\zbeg))$.
Complementing this lower bound, we gave an $O(\zbeopt \alpha(\zbeopt))$ upper bound on the grammar size, and analogous statements for run length grammars.
On the algorithmic side, we presented a linear time algorithm that computes the greedy LZBE factorization.
We also designed a random access index that supports access in $O(\log n)$ time using $O(\zbe)$ space and build time, by combining interval biased search trees with symmetric centroid path decomposition over the dependency DAG.

\myblock{Acknowledgments}
This work was supported by JSPS KAKENHI Grant Numbers JP25K00136 (YN), JP20K19743, JP20H00605, JP25H01114 (YY), JP23K18466, JP23K24808 (SI), and JST CRONOS Japan Grant Number JPMJCS24K2 (YY).

\clearpage

\FloatBarrier

\bibliography{mybib}

@inproceedings{DBLP:conf/focs/Weiner73,
  author    = {Peter Weiner},
  title     = {Linear Pattern Matching Algorithms},
  booktitle = {14th Annual Symposium on Switching and Automata Theory, Iowa City,
               Iowa, USA, October 15-17, 1973},
  pages     = {1--11},
  publisher = {{IEEE} Computer Society},
  year      = {1973},
  url       = {https://doi.org/10.1109/SWAT.1973.13},
  doi       = {10.1109/SWAT.1973.13},
  timestamp = {Wed, 15 Dec 2021 10:47:19 +0100},
  biburl    = {https://dblp.org/rec/conf/focs/Weiner73.bib},
  bibsource = {dblp computer science bibliography, https://dblp.org}
}

@article{DBLP:journals/siamcomp/ManberM93,
  author    = {Udi Manber and
               Eugene W. Myers},
  title     = {Suffix Arrays: {A} New Method for On-Line String Searches},
  journal   = {{SIAM} J. Comput.},
  volume    = {22},
  number    = {5},
  pages     = {935--948},
  year      = {1993},
  url       = {https://doi.org/10.1137/0222058},
  doi       = {10.1137/0222058},
  timestamp = {Wed, 14 Nov 2018 10:45:07 +0100},
  biburl    = {https://dblp.org/rec/journals/siamcomp/ManberM93.bib},
  bibsource = {dblp computer science bibliography, https://dblp.org}
}

@article{DBLP:journals/jcss/SleatorT83,
  author    = {Daniel Dominic Sleator and
               Robert Endre Tarjan},
  title     = {A Data Structure for Dynamic Trees},
  journal   = {J. Comput. Syst. Sci.},
  volume    = {26},
  number    = {3},
  pages     = {362--391},
  year      = {1983},
  url       = {https://doi.org/10.1016/0022-0000(83)90006-5},
  doi       = {10.1016/0022-0000(83)90006-5},
  timestamp = {Tue, 16 Feb 2021 14:04:07 +0100},
  biburl    = {https://dblp.org/rec/journals/jcss/SleatorT83.bib},
  bibsource = {dblp computer science bibliography, https://dblp.org}
}

@inproceedings{DBLP:conf/latin/BenderF00,
  author    = {Michael A. Bender and
               Martin Farach{-}Colton},
  editor    = {Gaston H. Gonnet and
               Daniel Panario and
               Alfredo Viola},
  title     = {The {LCA} Problem Revisited},
  booktitle = {{LATIN} 2000: Theoretical Informatics, 4th Latin American Symposium,
               Punta del Este, Uruguay, April 10-14, 2000, Proceedings},
  series    = {Lecture Notes in Computer Science},
  volume    = {1776},
  pages     = {88--94},
  publisher = {Springer},
  year      = {2000},
  url       = {https://doi.org/10.1007/10719839_9},
  doi       = {10.1007/10719839\_9},
  timestamp = {Fri, 09 Apr 2021 18:42:35 +0200},
  biburl    = {https://dblp.org/rec/conf/latin/BenderF00.bib},
  bibsource = {dblp computer science bibliography, https://dblp.org}
}

@article{DBLP:journals/siamcomp/BilleLRSSW15,
  author    = {Philip Bille and
               Gad M. Landau and
               Rajeev Raman and
               Kunihiko Sadakane and
               Srinivasa Rao Satti and
               Oren Weimann},
  title     = {Random Access to Grammar-Compressed Strings and Trees},
  journal   = {{SIAM} J. Comput.},
  volume    = {44},
  number    = {3},
  pages     = {513--539},
  year      = {2015},
  url       = {https://doi.org/10.1137/130936889},
  doi       = {10.1137/130936889},
  timestamp = {Fri, 02 Nov 2018 09:32:24 +0100},
  biburl    = {https://dblp.org/rec/journals/siamcomp/BilleLRSSW15.bib},
  bibsource = {dblp computer science bibliography, https://dblp.org}
}

@article{DBLP:journals/siamcomp/BentST85,
  author       = {Samuel W. Bent and
                  Daniel Dominic Sleator and
                  Robert Endre Tarjan},
  title        = {Biased Search Trees},
  journal      = {{SIAM} J. Comput.},
  volume       = {14},
  number       = {3},
  pages        = {545--568},
  year         = {1985},
  url          = {https://doi.org/10.1137/0214041},
  doi          = {10.1137/0214041},
  timestamp    = {Sat, 27 May 2017 14:22:59 +0200},
  biburl       = {https://dblp.org/rec/journals/siamcomp/BentST85.bib},
  bibsource    = {dblp computer science bibliography, https://dblp.org}
}

@article{DBLP:journals/tit/ZivL77,
  author       = {Jacob Ziv and
                  Abraham Lempel},
  title        = {A universal algorithm for sequential data compression},
  journal      = {{IEEE} Trans. Inf. Theory},
  volume       = {23},
  number       = {3},
  pages        = {337--343},
  year         = {1977},
  url          = {https://doi.org/10.1109/TIT.1977.1055714},
  doi          = {10.1109/TIT.1977.1055714},
  timestamp    = {Tue, 10 Mar 2020 10:48:17 +0100},
  biburl       = {https://dblp.org/rec/journals/tit/ZivL77.bib},
  bibsource    = {dblp computer science bibliography, https://dblp.org}
}

@article{DBLP:journals/jacm/GanardiJL21,
  author       = {Moses Ganardi and
                  Artur Jez and
                  Markus Lohrey},
  title        = {Balancing Straight-line Programs},
  journal      = {J. {ACM}},
  volume       = {68},
  number       = {4},
  pages        = {27:1--27:40},
  year         = {2021},
  url          = {https://doi.org/10.1145/3457389},
  doi          = {10.1145/3457389},
  timestamp    = {Wed, 01 Sep 2021 12:45:27 +0200},
  biburl       = {https://dblp.org/rec/journals/jacm/GanardiJL21.bib},
  bibsource    = {dblp computer science bibliography, https://dblp.org}
}

@inproceedings{DBLP:conf/focs/AlstrupHR98,
  author       = {Stephen Alstrup and
                  Thore Husfeldt and
                  Theis Rauhe},
  title        = {Marked Ancestor Problems},
  booktitle    = {39th Annual Symposium on Foundations of Computer Science, {FOCS} '98,
                  November 8-11, 1998, Palo Alto, California, {USA}},
  pages        = {534--544},
  publisher    = {{IEEE} Computer Society},
  year         = {1998},
  url          = {https://doi.org/10.1109/SFCS.1998.743504},
  doi          = {10.1109/SFCS.1998.743504},
  timestamp    = {Thu, 23 Mar 2023 23:57:52 +0100},
  biburl       = {https://dblp.org/rec/conf/focs/AlstrupHR98.bib},
  bibsource    = {dblp computer science bibliography, https://dblp.org}
}

@article{DBLP:journals/ijcga/ChazelleR91,
  author       = {Bernard Chazelle and
                  Burton Rosenberg},
  title        = {The complexity of computing partial sums off-line},
  journal      = {Int. J. Comput. Geom. Appl.},
  volume       = {1},
  number       = {1},
  pages        = {33--45},
  year         = {1991},
  url          = {https://doi.org/10.1142/S0218195991000049},
  doi          = {10.1142/S0218195991000049},
  timestamp    = {Thu, 04 Jun 2020 19:43:21 +0200},
  biburl       = {https://dblp.org/rec/journals/ijcga/ChazelleR91.bib},
  bibsource    = {dblp computer science bibliography, https://dblp.org}
}

@article{DBLP:journals/tcs/Rytter03,
  author       = {Wojciech Rytter},
  title        = {Application of {L}empel-{Z}iv factorization to the approximation of grammar-based
                  compression},
  journal      = {Theor. Comput. Sci.},
  volume       = {302},
  number       = {1-3},
  pages        = {211--222},
  year         = {2003},
  url          = {https://doi.org/10.1016/S0304-3975(02)00777-6},
  doi          = {10.1016/S0304-3975(02)00777-6},
  timestamp    = {Wed, 17 Feb 2021 21:56:47 +0100},
  biburl       = {https://dblp.org/rec/journals/tcs/Rytter03.bib},
  bibsource    = {dblp computer science bibliography, https://dblp.org}
}

@inproceedings{DBLP:conf/focs/Harel80,
  author       = {Dov Harel},
  title        = {A Linear Time Algorithm for the Lowest Common Ancestors Problem (Extended
                  Abstract)},
  booktitle    = {21st Annual Symposium on Foundations of Computer Science, Syracuse,
                  New York, USA, 13-15 October 1980},
  pages        = {308--319},
  publisher    = {{IEEE} Computer Society},
  year         = {1980},
  url          = {https://doi.org/10.1109/SFCS.1980.6},
  doi          = {10.1109/SFCS.1980.6},
  timestamp    = {Thu, 23 Mar 2023 23:57:54 +0100},
  biburl       = {https://dblp.org/rec/conf/focs/Harel80.bib},
  bibsource    = {dblp computer science bibliography, https://dblp.org}
}

@article{DBLP:journals/tcs/KreftN13,
  author       = {Sebastian Kreft and
                  Gonzalo Navarro},
  title        = {On compressing and indexing repetitive sequences},
  journal      = {Theor. Comput. Sci.},
  volume       = {483},
  pages        = {115--133},
  year         = {2013},
  url          = {https://doi.org/10.1016/j.tcs.2012.02.006},
  doi          = {10.1016/J.TCS.2012.02.006},
  timestamp    = {Wed, 28 Feb 2024 00:16:39 +0100},
  biburl       = {https://dblp.org/rec/journals/tcs/KreftN13.bib},
  bibsource    = {dblp computer science bibliography, https://dblp.org}
}

@inproceedings{DBLP:conf/esa/BannaiFHMP24,
  author       = {Hideo Bannai and
                  Mitsuru Funakoshi and
                  Diptarama Hendrian and
                  Myuji Matsuda and
                  Simon J. Puglisi},
  editor       = {Timothy M. Chan and
                  Johannes Fischer and
                  John Iacono and
                  Grzegorz Herman},
  title        = {Height-Bounded {L}empel-{Z}iv Encodings},
  booktitle    = {32nd Annual European Symposium on Algorithms, {ESA} 2024, September
                  2-4, 2024, Royal Holloway, London, United Kingdom},
  series       = {LIPIcs},
  volume       = {308},
  pages        = {18:1--18:18},
  publisher    = {Schloss Dagstuhl - Leibniz-Zentrum f{\"{u}}r Informatik},
  year         = {2024},
  url          = {https://doi.org/10.4230/LIPIcs.ESA.2024.18},
  doi          = {10.4230/LIPICS.ESA.2024.18},
  timestamp    = {Mon, 03 Mar 2025 21:03:44 +0100},
  biburl       = {https://dblp.org/rec/conf/esa/BannaiFHMP24.bib},
  bibsource    = {dblp computer science bibliography, https://dblp.org}
}

@article{DBLP:journals/csur/Navarro21a,
  author       = {Gonzalo Navarro},
  title        = {Indexing Highly Repetitive String Collections, Part {I:} Repetitiveness
                  Measures},
  journal      = {{ACM} Comput. Surv.},
  volume       = {54},
  number       = {2},
  pages        = {29:1--29:31},
  year         = {2022},
  url          = {https://doi.org/10.1145/3434399},
  doi          = {10.1145/3434399},
  timestamp    = {Wed, 28 Feb 2024 00:16:37 +0100},
  biburl       = {https://dblp.org/rec/journals/csur/Navarro21a.bib},
  bibsource    = {dblp computer science bibliography, https://dblp.org}
}

@inproceedings{DBLP:conf/cpm/BelazzouguiKPR21,
  author       = {Djamal Belazzougui and
                  Dmitry Kosolobov and
                  Simon J. Puglisi and
                  Rajeev Raman},
  editor       = {Pawel Gawrychowski and
                  Tatiana Starikovskaya},
  title        = {Weighted Ancestors in Suffix Trees Revisited},
  booktitle    = {32nd Annual Symposium on Combinatorial Pattern Matching, {CPM} 2021,
                  July 5-7, 2021, Wroc{\l}aw, Poland},
  series       = {LIPIcs},
  volume       = {191},
  pages        = {8:1--8:15},
  publisher    = {Schloss Dagstuhl - Leibniz-Zentrum f{\"{u}}r Informatik},
  year         = {2021},
  url          = {https://doi.org/10.4230/LIPIcs.CPM.2021.8},
  doi          = {10.4230/LIPICS.CPM.2021.8},
  timestamp    = {Mon, 03 Mar 2025 21:01:01 +0100},
  biburl       = {https://dblp.org/rec/conf/cpm/BelazzouguiKPR21.bib},
  bibsource    = {dblp computer science bibliography, https://dblp.org}
}

@article{DBLP:journals/cacm/Vuillemin80,
  author       = {Jean Vuillemin},
  title        = {A Unifying Look at Data Structures},
  journal      = {Commun. {ACM}},
  volume       = {23},
  number       = {4},
  pages        = {229--239},
  year         = {1980},
  url          = {https://doi.org/10.1145/358841.358852},
  doi          = {10.1145/358841.358852},
  timestamp    = {Sun, 02 Jun 2019 20:48:58 +0200},
  biburl       = {https://dblp.org/rec/journals/cacm/Vuillemin80.bib},
  bibsource    = {dblp computer science bibliography, https://dblp.org}
}

@inproceedings{DBLP:conf/soda/KempaS22,
  author       = {Dominik Kempa and
                  Barna Saha},
  editor       = {Joseph (Seffi) Naor and
                  Niv Buchbinder},
  title        = {An Upper Bound and Linear-Space Queries on the LZ-End Parsing},
  booktitle    = {Proceedings of the 2022 {ACM-SIAM} Symposium on Discrete Algorithms,
                  {SODA} 2022, Virtual Conference / Alexandria, VA, USA, January 9 -
                  12, 2022},
  pages        = {2847--2866},
  publisher    = {{SIAM}},
  year         = {2022},
  url          = {https://doi.org/10.1137/1.9781611977073.111},
  doi          = {10.1137/1.9781611977073.111},
  timestamp    = {Tue, 12 Apr 2022 11:24:57 +0200},
  biburl       = {https://dblp.org/rec/conf/soda/KempaS22.bib},
  bibsource    = {dblp computer science bibliography, https://dblp.org}
}

@inproceedings{DBLP:conf/mfcs/NishimotoIIBT16,
  author       = {Takaaki Nishimoto and
                  Tomohiro I and
                  Shunsuke Inenaga and
                  Hideo Bannai and
                  Masayuki Takeda},
  editor       = {Piotr Faliszewski and
                  Anca Muscholl and
                  Rolf Niedermeier},
  title        = {Fully Dynamic Data Structure for {LCE} Queries in Compressed Space},
  booktitle    = {41st International Symposium on Mathematical Foundations of Computer
                  Science, {MFCS} 2016, August 22-26, 2016 - Krak{\'{o}}w, Poland},
  series       = {LIPIcs},
  volume       = {58},
  pages        = {72:1--72:15},
  publisher    = {Schloss Dagstuhl - Leibniz-Zentrum f{\"{u}}r Informatik},
  year         = {2016},
  url          = {https://doi.org/10.4230/LIPIcs.MFCS.2016.72},
  doi          = {10.4230/LIPICS.MFCS.2016.72},
  timestamp    = {Tue, 11 Feb 2020 15:52:14 +0100},
  biburl       = {https://dblp.org/rec/conf/mfcs/NishimotoIIBT16.bib},
  bibsource    = {dblp computer science bibliography, https://dblp.org}
}

@article{DBLP:journals/corr/abs-2406-06321,
  author       = {Noga Alon and
                  Baruch Schieber},
  title        = {Optimal Preprocessing for Answering On-Line Product Queries},
  journal      = {CoRR},
  volume       = {abs/2406.06321},
  year         = {2024},
  url          = {https://doi.org/10.48550/arXiv.2406.06321},
  doi          = {10.48550/arxiv.2406.06321},
  eprinttype    = {arXiv},
  eprint       = {2406.06321},
  timestamp    = {Sat, 13 Jul 2024 22:07:02 +0200},
  biburl       = {https://dblp.org/rec/journals/corr/abs-2406-06321.bib},
  bibsource    = {dblp computer science bibliography, https://dblp.org}
}

@article{DBLP:journals/tit/LempelZ76,
  author       = {Abraham Lempel and
                  Jacob Ziv},
  title        = {On the Complexity of Finite Sequences},
  journal      = {{IEEE} Trans. Inf. Theory},
  volume       = {22},
  number       = {1},
  pages        = {75--81},
  year         = {1976},
  url          = {https://doi.org/10.1109/TIT.1976.1055501},
  doi          = {10.1109/TIT.1976.1055501},
  timestamp    = {Tue, 10 Mar 2020 10:45:09 +0100},
  biburl       = {https://dblp.org/rec/journals/tit/LempelZ76.bib},
  bibsource    = {dblp computer science bibliography, https://dblp.org}
}

@inproceedings{DBLP:conf/cpm/LiptakM024,
  author       = {Zsuzsanna Lipt{\'{a}}k and
                  Francesco Masillo and
                  Gonzalo Navarro},
  editor       = {Shunsuke Inenaga and
                  Simon J. Puglisi},
  title        = {{BAT-LZ} out of hell},
  booktitle    = {35th Annual Symposium on Combinatorial Pattern Matching, {CPM} 2024,
                  June 25-27, 2024, Fukuoka, Japan},
  series       = {LIPIcs},
  volume       = {296},
  pages        = {21:1--21:17},
  publisher    = {Schloss Dagstuhl - Leibniz-Zentrum f{\"{u}}r Informatik},
  year         = {2024},
  url          = {https://doi.org/10.4230/LIPIcs.CPM.2024.21},
  doi          = {10.4230/LIPICS.CPM.2024.21},
  timestamp    = {Wed, 21 Aug 2024 22:46:00 +0200},
  biburl       = {https://dblp.org/rec/conf/cpm/LiptakM024.bib},
  bibsource    = {dblp computer science bibliography, https://dblp.org}
}

@article{DBLP:journals/rfc/rfc1952,
  author       = {Peter Deutsch},
  title        = {{GZIP} file format specification version 4.3},
  journal      = {{RFC}},
  volume       = {1952},
  pages        = {1--12},
  year         = {1996},
  url          = {https://doi.org/10.17487/RFC1952},
  doi          = {10.17487/RFC1952},
  timestamp    = {Tue, 14 May 2019 10:03:48 +0200},
  biburl       = {https://dblp.org/rec/journals/rfc/rfc1952.bib},
  bibsource    = {dblp computer science bibliography, https://dblp.org}
}

@article{DBLP:journals/rfc/rfc7932,
  author       = {Jyrki Alakuijala and
                  Zoltan Szabadka},
  title        = {Brotli Compressed Data Format},
  journal      = {{RFC}},
  volume       = {7932},
  pages        = {1--128},
  year         = {2016},
  url          = {https://doi.org/10.17487/RFC7932},
  doi          = {10.17487/RFC7932},
  timestamp    = {Tue, 14 May 2019 10:03:17 +0200},
  biburl       = {https://dblp.org/rec/journals/rfc/rfc7932.bib},
  bibsource    = {dblp computer science bibliography, https://dblp.org}
}

@article{DBLP:journals/rfc/rfc8478,
  author       = {Yann Collet and
                  Murray S. Kucherawy},
  title        = {Zstandard Compression and the application/zstd Media Type},
  journal      = {{RFC}},
  volume       = {8478},
  pages        = {1--54},
  year         = {2018},
  url          = {https://doi.org/10.17487/RFC8478},
  doi          = {10.17487/RFC8478},
  timestamp    = {Tue, 14 May 2019 10:04:48 +0200},
  biburl       = {https://dblp.org/rec/journals/rfc/rfc8478.bib},
  bibsource    = {dblp computer science bibliography, https://dblp.org}
}

@article{DBLP:journals/iandc/CicaleseU25,
  author       = {Ferdinando Cicalese and
                  Francesca Ugazio},
  title        = {Hardness and approximability of bounded access Lempel Ziv coding},
  journal      = {Inf. Comput.},
  volume       = {307},
  pages        = {105366},
  year         = {2025},
  url          = {https://doi.org/10.1016/j.ic.2025.105366},
  doi          = {10.1016/J.IC.2025.105366},
  timestamp    = {Sun, 16 Nov 2025 10:09:07 +0100},
  biburl       = {https://dblp.org/rec/journals/iandc/CicaleseU25.bib},
  bibsource    = {dblp computer science bibliography, https://dblp.org}
}

\clearpage

\appendix

\section{A Lower Bound on the Approximation Ratio of Greedy to Optimal LZBE Factorization} \label{app:greedy_approx}
\begin{theorem}
There exists a family of binary strings such that the greedy LZBE factorization produces asymptotically twice as many factors as some alternative valid LZBE factorization.
\end{theorem}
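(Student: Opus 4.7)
The plan is to exhibit a one-parameter family of binary strings $T_k$ of length $\Theta(k)$ together with two distinct LZSE factorizations of $T_k$ — the greedy one and a hand-crafted alternative — and to compare their sizes directly, rather than trying to reason about the optimum. Concretely, I would aim for a construction of the form
\[
T_k = \sigma \cdot B_1 B_2 \cdots B_k,
\]
where $\sigma$ is a fixed short seed over $\{a,b\}$ and each $B_i$ is a fixed short pattern. The blocks are chosen so that the greedy rule — take the longest prefix matching a contiguous sequence of existing factors — is forced, for every block, to use exactly two copy factors, while a slightly different parse of $\sigma$ enlarges the pool of available contiguous sequences just enough for each $B_i$ to become a single copy factor in the alternative parse.

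The argument would proceed in three steps. First, I would compute $\zseg(T_k)$ explicitly by induction on $i$: assuming the greedy parse of $\sigma B_1 \cdots B_i$ has produced a constant number of factors in $\sigma$ plus exactly two factors per previous block, and that the resulting factor list contains no contiguous sequence $F_l \cdots F_r$ that equals all of $B_{i+1}$, I would check that greedy again breaks $B_{i+1}$ into two factors. Second, I would exhibit the alternative factorization by writing out its factor list in closed form and verifying, factor by factor, the LZSE condition that each copy factor coincides with a contiguous $F_l \cdots F_r$ of strictly earlier factors in that same alternative list. Third, combining $\zseg(T_k) \ge 2k - O(1)$ with an alternative size of $k + O(1)$ yields the asymptotic ratio of $2$.

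The main obstacle will be the inductive step in the greedy analysis. Because LZSE copy factors must equal contiguous subsequences of the current factor list, the set of admissible match lengths grows as the parse progresses, and one must rule out the possibility that some long accumulated sequence eventually matches an entire $B_{i+1}$ under greedy (collapsing two greedy factors into one and destroying the $2k$ lower bound). Designing the pair $(\sigma, B)$ so that this asymmetry — the same long match exists in the alternative parse but not in the greedy parse — persists for all $i$ over the binary alphabet, where coincidental long repetitions are frequent, is the delicate point. Once such a pair is fixed, the remaining verification of both parses is a careful but routine case analysis of the available contiguous factor sequences at each step.
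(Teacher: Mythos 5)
Your high-level strategy is the same as the paper's: a seed whose greedy parse and whose slightly perturbed alternative parse differ in which contiguous factor sequences they make available, so that each subsequent block costs two greedy factors but only one factor in the alternative parse. However, the proposal as written has a genuine gap, and moreover the specific skeleton you commit to cannot be instantiated. You propose $T_k = \sigma B_1 \cdots B_k$ of length $\Theta(k)$ where each block is ``a fixed short pattern'' and you later speak of designing ``the pair $(\sigma,B)$,'' i.e.\ a single repeated block $B$. This fails for a structural reason: once the greedy parse has split the first occurrence of $B$ into two consecutive factors $F_p F_{p+1}$, every later occurrence of $B$ equals the contiguous sequence $F_p F_{p+1}$ and is therefore itself a legal single LZSE copy factor, so greedy also takes one factor per block from the second block onward and the ratio collapses to $1+o(1)$. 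Escaping this forces the blocks to be pairwise distinct, and over a binary alphabet $k$ distinct blocks cannot all have length $O(1)$, so the string cannot have length $\Theta(k)$ with $\Theta(k)$ factors of the kind you describe. The paper's construction confronts exactly this: its blocks $A_i b^{2^{j+1}} b$ grow (the string has length exponential in the parameter $m$ while the factor counts are $\Theta(m^2)$), and the greedy analysis is not ``routine'' --- greedy factors straddle block boundaries (e.g.\ the factor $b A_i b^{2^j}$ absorbs the trailing $b$ of the previous block), which is precisely the phenomenon your induction hypothesis would have to track.

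The second, more basic issue is that the theorem is an existence statement, and the entire content of any proof is the explicit family together with the verification of both parses. You correctly identify the delicate point (preventing accumulated contiguous factor sequences from ever matching a whole later block under greedy) but then defer it: no seed, no blocks, and no induction are actually given. As it stands this is a proof plan whose only concrete commitment (identical constant-size blocks, linear total length) is the one choice that provably does not work.
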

\begin{proof}
For any positive integer $m > 1$, define the strings $A$, $B$, $A_i$, and $T_{i,j}$ over the alphabet $\{ a, b \}$ for $0 \leq i < m$ and $0 \leq j < m$ as follows:
\begin{align*}
    A &= a^{2^{m+1}}, \\
    B &= b^{2^{m+1}} b, \\
    A_i &= \prod_{k=m-i+1}^{m} a^{2^k},\\
    T_{i,j} &= A B 
        \cdot
        \left( 
        \prod_{i'=1}^{i-1} 
            \prod_{j'=1}^{m-1} 
                A_{i'} b^{2^{j'+1}} b
        \right)
        \cdot
        \left(
        \prod_{j'=1}^{j} 
            A_i b^{2^{j'+1}} b 
        \right),
\end{align*}
where $\prod_{i=x}^y S_i$ denotes the concatenation of $S_x, \dots, S_y$, and is defined as the empty string if $x > y$.
Note that $T_{i,j}$ can be obtained by appending $A_i b^{2^{j+1}} b$ to the end of $T_{i,j-1}$, and that $T_{i, m-1} = T_{i+1, 0}$ hold.
The string $T_{m-1,m-1}$ gives a lower bound.

We first analyze the greedy LZBE factorization of $T_{i,j}$ by induction.
The base case is $T_{0, 0} = A B$, which is parsed as:
\[
a|a|a^2|a^4|\dots|a^{2^m}|b|b|b^2|b^4|\dots|b^{2^m}|b.
\]
When constructing the factorization of $T_{i,j}$ from that of $T_{i,j-1}$, all factors of $T_{i,j-1}$ except the last one remain unchanged. We prove the following:
\begin{itemize}
    \item If $j = 1$, the greedy factorization of $T_{i,j}$ is obtained by appending two new factors: $A_i b^{2^{j+1}}$ and $b$.
    \item If $j > 1$, the greedy factorization of $T_{i,j}$ is obtained by replacing the last factor $b$ of $T_{i,j-1}$ with $b A_i b^{2^j}$, and then appending two new factors: $b^{2^j}$ and $b$.
\end{itemize}
From this, we observe that the last factor of $T_{i, j}$ is always $b$ for all $1 \leq i \leq m$ and $0 \leq j < m$.
We first consider the case $j = 1$.  
Here, $T_{i, j-1} = T_{i,0} = T_{i-1, m-1}$, and the appended string is $A_i b^4 b$.  
Since the length of $A_i$ increases with $i$, the substring $A_i$ does not appear in $T_{i,j-1}$, except possibly as part of $A$.  
Thus, the last factor $b$ of $T_{i,j-1}$ remains unchanged in $T_{i,j}$.  
We can take $A_i b^4$ as a concatenation of factors by referring to $A_i$ from the suffix of $A$ and $b^4$ from the prefix of $B$.  
However, we cannot take $A_i b^5$ as a factor, because the factorization of $B$ follows the structure $b|b|b^2|b^4|\dots$, and hence only substrings like $A_i b^{2^x}$ can be taken as consecutive factors.  
Thus, the factorization for this case is correct.

Now consider the case $j > 1$.  
The suffix of $T_{i,j}$ has the form:
\[
b \cdot A_i b^{2^j} b \cdot A_i b^{2^{j+1}} b.
\]
By the induction hypothesis, the previous two factors in the greedy factorization of $T_{i,j-1}$ are 
$b$ and $A_i b^{2^j}$ when $j=2$, and $b A_i b^{2^{j-1}}$ and $b^{2^{j-1}}$ when $j>2$.
Therefore, we now need to determine the factorization for the new portion $b A_i b^{2^{j+1}} b$.
In this case, we can take $b A_i b^{2^j}$ by referring to the two previous consecutive factors.  
However, we cannot take a longer match, as the substring $b A_i b^{2^j}$ appears nowhere else in the factorized part of $T_{i,j}$ except in this location.  
The remaining suffix $b^{2^j} b$ includes a factor $b^{2^j}$ from $B$, but we cannot take $b^{2^j} b$ as a concatenation of existing factors.  
Hence, the greedy factorization proceeds correctly in this case as well.
Finally, we count the total number of factors.  
The base case $T_{0, 0}$ contains $2m + 5$ factors.  
The difference in the number of factors between $T_{i, j-1}$ and $T_{i,j}$ is always 2.  
Thus, the total number of factors in the greedy factorization of $T_{m-1,m-1}$ is:
\[
2m + 5 + 2(m - 1)(m - 1) = 2m^2 - 2m + 7.
\]

We now give an LZBE factorization of $T_{m-1,m-1}$ that produces a smaller number of factors.  
We parse the prefix $T_{0, 0} = A B$ of $T_{m-1,m-1}$ as:
\[
a|a|a^2|a^4|\dots|a^{2^m}|b|b|b|b^2|b^4|\dots|b^{2^m}.
\]
The only difference from the greedy factorization lies in the part of $B$.  
This factorization allows substrings of the form $A_i b^{2^{j+1}} b$ to be represented as a concatenation of multiple existing factors for all $1 \leq i < m$ and $1 \leq j < m$.
Thus, we can construct the factorization of $T_{i,j}$ recursively by concatenating the factorization of $T_{i,j-1}$ with a single new factor $A_i b^{2^{j+1}} b$.
The total number of factors in this factorization of $T_{m-1,m-1}$ is:
\[
2m + 5 + (m - 1)(m - 1) = m^2 + 6.
\]
Therefore, as $m \rightarrow \infty$, the ratio between the number of factors in the greedy factorization and in this factorization approaches 2 (See Figure \ref{fig:LZBE_greedy_opt}).
\end{proof}

\begin{figure}[t]
    \centering
        \centering
        \includegraphics[width=0.9\linewidth]{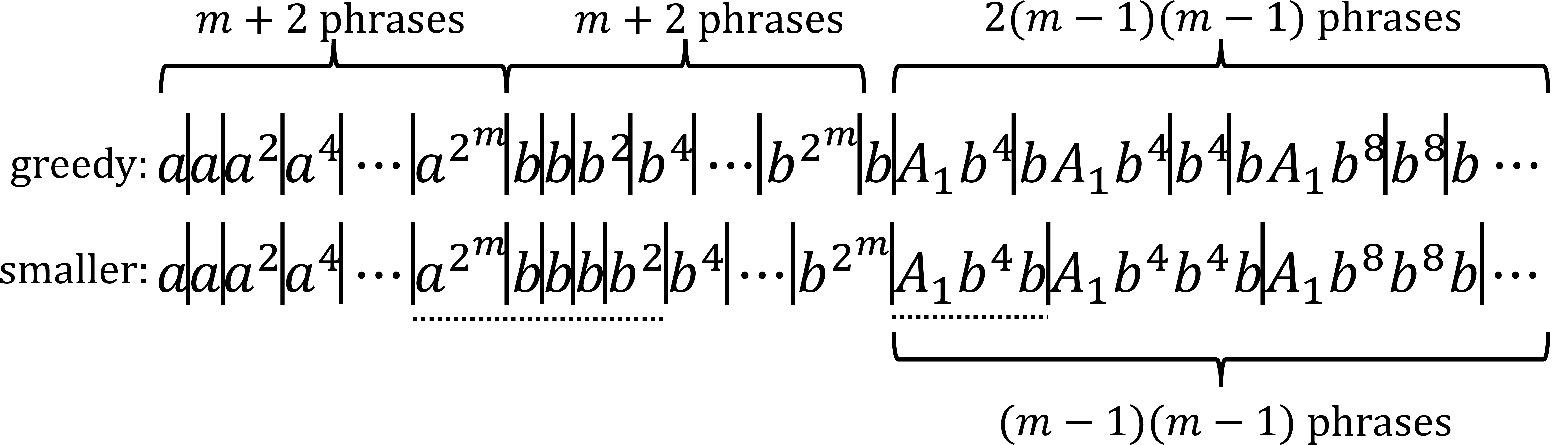}
    \caption{
    Illustration of the greedy LZBE factorization and a smaller LZBE factorization of the string $T_{m-1,m-1}$.  
    In the smaller factorization, the substring $A_1 b^4 b = a^{2^m} b^5$ (dotted lines) is selected as a single factor.
    }
    \label{fig:LZBE_greedy_opt}
\end{figure}
Although we have constructed a family of strings for which the ratio between the greedy LZBE factorization and a smaller one approaches 2,
we have not found any example where the ratio between the greedy and the optimal factorization exceeds 2.
This resembles the case of LZ-End, for which a 2-approximation is also conjectured.  
Motivated by this parallel, we conjecture that the greedy LZBE factorization achieves a 2-approximation; that is, for any string $T$, we have $\zbeg \leq 2 \zbeopt$.

\end{document}